\documentclass[a4paper,UKenglish,cleveref, autoref, thm-restate]{lipics-v2021}
\hideLIPIcs  %uncomment to remove references to LIPIcs series (logo, DOI, ...), e.g. when preparing a pre-final version to be uploaded to arXiv or another public repository

\title{On Knowledge Compilation Languages for QBFs\footnote{A version of this paper is accepted to be published in proceedings of the 46th IARCS Annual Conference on Foundations of Software Technology and Theoretical Computer Science (FSTTCS), 2026.}} %TODO Please add

\author{Sravanthi Chede}{The Institute of Mathematical Sciences (a CI of Homi Bhabha National Institute), Chennai, India \and \url{https://sravanthi-ch.github.io/webpage/} }{sravanthic@imsc.res.in}{https://orcid.org/0000-0001-7170-6156}{Supported by the ANRF J C Bose fellowship no. JCB/2023/000006.}%TODO mandatory, please use full name; only 1 author per \author macro; first two parameters are mandatory, other parameters can be empty. Please provide at least the name of the affiliation and the country. The full address is optional. Use additional curly braces to indicate the correct name splitting when the last name consists of multiple name parts.

\author{Vaibhav Krishan}{The Institute of Mathematical Sciences (a CI of Homi Bhabha National Institute), Chennai, India\and \url{https://vaibhkrishan.github.io/} }{vaibhkrishan@gmail.com}{https://orcid.org/0009-0000-0335-1963}{}

\author{Anil Shukla}{Indian Institute of Technology Ropar, Rupnagar, India \and \url{https://anilshukl.github.io/website/} }{anilshukla@iitrpr.ac.in}{https://orcid.org/0009-0009-9051-4374}{}

\authorrunning{S. Chede, V. Krishan and A. Shukla} %TODO mandatory. First: Use abbreviated first/middle names. Second (only in severe cases): Use first author plus 'et al.'

\Copyright{Sravanthi Chede, Vaibhav Krishan and Anil Shukla} %TODO mandatory, please use full first names. LIPIcs license is "CC-BY";  http://creativecommons.org/licenses/by/3.0/

\ccsdesc[500]{Theory of Computation~Proof Complexity} %TODO mandatory: Please choose ACM 2012 classifications from https://dl.acm.org/ccs/ccs_flat.cfm 

\keywords{Knowledge compilation, QBF, MaxQBF, Quantified model counting} %TODO mandatory; please add comma-separated list of keywords

\category{} %optional, e.g. invited paper

\relatedversion{} %optional, e.g. full version hosted on arXiv, HAL, or other respository/website
\supplementdetails[ linktext={Github:vaibhkrishan/TQTSolver},
  subcategory={TQTSolver Repository}
]{Software}{https://github.com/vaibhkrishan/TQTSolver.git}

\supplementdetails[
linktext={doi:10.5281/zenodo.23276976},
  subcategory={TQTSolver and Benchmarks}
]{Software and Dataset}{https://doi.org/10.5281/zenodo.23276976}

\acknowledgements{We sincerely thank Swagato Sanyal for valuable discussions and support throughout this work. We are thankful for interesting discussions with Olaf Beyersdorff, Meena Mahajan, Lea Kasche and Satya Amar. We also thank the anonymous reviewers for useful suggestions.
}%optional

\nolinenumbers %uncomment to disable line numbering

\EventEditors{Raghavan Komondoor, Ramprasad Saptharishi, and James Worrell}
\EventNoEds{3}
\EventLongTitle{46th IARCS Annual Conference on Foundations of Software Technology and Theoretical Computer Science (FSTTCS 2026)}
\EventShortTitle{FSTTCS 2026}
\EventAcronym{FSTTCS}
\EventYear{2026}
\EventDate{December 16--18, 2026}
\EventLocation{IIT Delhi, New Delhi, India}
\EventLogo{}
\SeriesVolume{400}
\ArticleNo{32}
\usepackage{graphicx}
\usepackage{lipsum}
\usepackage{amsmath}
\usepackage{makecell,array}
\usepackage{graphicx}

\usepackage{float}
\usepackage{gensymb}
\usepackage{adjustbox}
\usepackage{sectsty}
\usepackage{graphicx}
\usepackage{bussproofs}
\usepackage{bbm}
\usepackage{float}
\usepackage{xspace}
\usepackage{xifthen}
\usepackage{turnstile}
\usepackage{bm}
\usepackage{diagbox}
\usepackage{algorithm}
\usepackage{algpseudocode}
\usepackage{graphicx}
\usepackage[labelfont=bf]{caption}
\usepackage[labelfont=bf]{subcaption}
\usepackage{amsmath}
\usepackage{amssymb}

\newcommand{\Red}{\Math{\boldsymbol{\forall}\kern .04ex\ProofSystemsFont{red}}}

\newcommand{\red}{\Math{\,\raisebox{.2ex}{$\scriptstyle+$}\,\Red}}
\newcommand{\Min}{{\mathsf{Min}}}

\newlength{\breite}
\newcommand{\Math}[1]{\ensuremath{#1}}
\newcommand{\ComplexityClassFont}[1]{\mathsf{#1}}

\newcommand{\ProofSystemsFont}[1]{\mathsf{#1}}
\newcommand{\DefineProofSystem}[2]{\expandafter\def\csname#1\endcsname{\Math{\ComplexityClassFont{#2}}\xspace}}
\newcommand{\DefineProofSystemNoSpace}[2]
{\expandafter\def\csname#1\endcsname{\Math{\ComplexityClassFont{#2}}}}
\newcommand{\Frege}[1][]{\Math{\ifthenelse{\isempty{#1}}
{\ProofSystemsFont{Frege}}{#1\text{-}\ProofSystemsFont{Frege}}}}
\DefineProofSystem{eFrege}{e\Frege[]\xspace}
\newcommand{\eFregeRed}[1][]{\eFrege\!\!\red}
\DefineProofSystemNoSpace{DRAT}{\ProofSystemsFont{DRAT}}
\DefineProofSystem{MCPS}{\ProofSystemsFont{CLIP}}
\DefineProofSystemNoSpace{clip}{\ProofSystemsFont{CLIP}}
\DefineProofSystem{micep}{\ProofSystemsFont{MICE'}}
\DefineProofSystem{cpog}{\ProofSystemsFont{CPOG}}
\DefineProofSystem{kcps}{\ProofSystemsFont{KCPS(\#SAT)}}
\DefineProofSystem{fxyz}{\ProofSystemsFont{Truncated~ Quantified~ Tree}}
\DefineProofSystem{xyz}{\ProofSystemsFont{TQT}}
\DefineProofSystem{xyzs}{\ProofSystemsFont{TQT}s}
\DefineProofSystem{fcxyz}{\ProofSystemsFont{Certified~ Truncated~ Quantified~ Tree}}
\DefineProofSystem{cxyz}{\ProofSystemsFont{cert\text{-}TQT}}
\DefineProofSystem{dxyz}{\ProofSystemsFont{TQD}}
\DefineProofSystem{fdxyz}{\ProofSystemsFont{Truncated~ Quantified~ DAG}}
\DefineProofSystem{cdxyz}{\ProofSystemsFont{cert\text{-}TQD}}
\DefineProofSystem{fcdxyz}{\ProofSystemsFont{Certified~ Truncated~ Quantified~ DAG}}

\DefineProofSystem{mice}{\ProofSystemsFont{MICE}}
\DefineProofSystem{qmice}{\ProofSystemsFont{Q\text{-}MICE}}
\DefineProofSystem{emice}{\Math{\forall}\ProofSystemsFont{Exp\text{+}MICE}}
\DefineProofSystem{exp}{\Math{\forall}\ProofSystemsFont{Exp\text{+}}\mathcal{P}}
\DefineProofSystem{qbfsolver}{\ProofSystemsFont{TQTSolver}}

\newcommand{\abs}[1]{\ensuremath{\lvert #1 \rvert}}

\newcommand{\vars}{\mathrm{\textit{vars}}}

\usepackage[dvipsnames]{xcolor}

\usepackage{tikz}
\usetikzlibrary{shapes,automata, matrix, arrows.meta, positioning}
\usetikzlibrary{decorations.pathreplacing}

\usetikzlibrary{fadings}
\tikzstyle{uedge}=[draw=blue!50!red]
\tikzstyle{fedge}=[draw=blue]
\tikzstyle{iedge}=[draw=red]
\tikzstyle{redge}=[draw=green!50!black]
\tikzstyle{rnode}=[draw,inner sep=2pt,color=black]
\tikzstyle{tnode}=[circle,minimum width=3pt,fill,inner sep=0pt]
\tikzstyle{dotnode}=[circle,minimum width=2pt,fill,inner sep=0pt]
\tikzstyle{labn}=[font=\sffamily,circle,fill=white,inner sep=1pt,draw=black]
\tikzstyle{legn}=[font=\scriptsize]
\tikzstyle{reln}=[circle,fill=white,inner sep=.4pt,draw=black]
\tikzstyle{oreln}=[circle,fill=white,inner sep=.4pt,draw=black!50,solid]
\tikzstyle{exist}=[thick, circle,fill=white,inner sep=3pt,draw=black,solid,draw]
\tikzstyle{univ}=[thick, rectangle,fill=white,inner sep=5pt,draw=black,solid,draw]
\tikzstyle{oree}=[thick,draw=black!50,densely dashed]
\tikzstyle{ree}=[thick,draw=black]
\tikzstyle{calcn}=[rectangle%
,rounded corners=1mm
,draw=black,
minimum height=.5cm,minimum width=.5cm,inner sep=5pt%
]
\tikzset{cross/.style={cross out, draw=black, fill=none, minimum size=.4cm, inner sep=0pt, outer sep=0pt}, cross/.default={1pt}}

\newtheoremstyle{TheoremNum}
    {\topsep}{\topsep}              %%% space between body and thm
    {\itshape}                      %%% Thm body font
    {}                              %%% Indent amount (empty = no indent)
    {\bfseries}                     %%% Thm head font
    {}                             %%% Punctuation after thm head
    { }                             %%% Space after thm head
    {\thmname{#1}\thmnote{ \textcolor{darkblue}{\bfseries #3}}}%%% Thm head spec
\theoremstyle{TheoremNum}

\newtheoremstyle{TheoremNum2}
    {\topsep}{\topsep}              %%% space between body and thm
    {\normalfont}                      %%% Thm body font
    {}                              %%% Indent amount (empty = no indent)
    {\itshape}                     %%% Thm head font
    {.}                             %%% Punctuation after thm head
    { }                             %%% Space after thm head
    {\thmname{#1}\thmnote{ \textcolor{darkblue}{\bfseries #3}}}%%% Thm head spec
\theoremstyle{TheoremNum2}

\begin{document}

\maketitle
%TODO mandatory: add short abstract of the document
\begin{abstract}
In this work, we design knowledge compilation languages for quantified Boolean formulas (QBFs), with a particular focus on model counting. We introduce various languages in increasing order of their power of representation, starting from a simple tree-based structure, and extending it to DAGs for a more efficient representation. We study the power and limitations of each language. We also show that QBF queries such as model counting and MaxQBF are easy to handle in all these languages.

These languages naturally lead to various static proof systems for the \#QBF problem, the model-counting version of the QBF problem.
We compare these systems with one another and with a recently introduced line-based \#QBF-system called Q-MICE (Chede, Chew, Krishan and Shukla, SAT `26). We show that a minor extension of the Q-MICE system efficiently simulates all these static proof systems, and is in fact exponentially stronger than each of them.

We also initiate the implementation of a solver based on these languages.
The solver compiles a representation of an input QBF in the proposed tree-like language, which is subsequently used for model counting.
This is the first knowledge compilation based solver for \#QBF which can handle QBFs with doubly exponential model counts as well. 
Experimentally, we compare its performance with the state-of-the-art \#QBF-solver d4-QBF which works well for QBFs with small model counts. 
\end{abstract}

\section{Introduction}
In logical theory, knowledge compilation is a useful technique for dealing with various computationally hard problems, such as validity, model counting, entailment, and equivalence~\cite{DBLP:journals/jair/DarwicheM02}.
In this technique, a given logical theory (such as a CNF) is transformed into a target language (such as a decomposable negation normal form, DNNF~\cite{DBLP:journals/jacm/Darwiche01}) that can support efficient problem solving. 
Crucially, different languages allow to solve different problems efficiently, hence, the choice of a target language depends on the problems required to solve.

In this article, we study knowledge compilation for quantified Boolean formulas (QBFs).
We focus on solving important intractable queries (problems) on QBFs, for example, QBF model counting (\#QBF)~\cite{Ladner89} and QBF MaxSAT (MaxQBF)~\cite{DBLP:conf/sat/IgnatievJM13}. Recently, there has been a great interest in the \#QBF problem, both from the solving and the proof complexity perspectives. The \#QBF solver d4-QBF~\cite{CapelliLPS24} identifies different disconnected components in the QBF matrix, solves them separately and merges them at the end. It also uses several SAT solving techniques to compute the model count. The \qmice (Quantified Model-counting Induction by Claim Extension)\cite{ChedeCKS26} proof system, an extension of the \#SAT proof system \mice~\cite{fichte2022proofs},   has been recently introduced. \qmice is a line-based proof system that computes and certifies the model count of the input QBF using sound inference rules. 

Here, we propose a new target language for QBFs: Certified Truncated Quantified Tree (\cxyz, Section~\ref{sec:cert_tqt}). We show that it supports intractable queries such as \#QBF and MaxQBF efficiently with respect to the size of the representation. For \#QBF, we show that \cxyz supports computation of the model count in number of arithmetic operations bounded polynomially in terms of the representation size. Note that the model count can be doubly exponential in the number of variables of the QBF; so any algorithm computing the model count explicitly must run in exponential time in the worst case. For MaxQBF, we show that the \cxyz representation of a related QBF enables us to answer the MaxQBF query for the original QBF in time polynomial in the representation.

We also propose a DAG-like target language, called `Certified Truncated Quantified DAG' (\cdxyz, Section~\ref{sec:cert_tqd}), which generalizes \cxyz.
In addition to theoretical analysis of the proposed languages, we also provide a tool for compiling a \xyz structure from a QBF (Section~\ref{sec:tqt_compiler}).
We briefly list out our contributions below.

\paragraph*
{Our Contributions:}
\begin{enumerate}
    \item {\bf Introducing new target languages \xyz and \dxyz}: Informally, the language \xyz for a set of variables $X =\{x_1,x_2,\dots,x_n\}$ is a tuple $(Q,T)$, where $Q= Q_1x_1Q_2x_2...Q_nx_n$, $Q_i \in \{\exists,\forall\}, Q_i \neq Q_{i+1}$, and $T$ is a rooted tree where every internal node is labelled by a variable in $X$. $T$ has certain constraints : the root of $T$ is labelled by $x_1$ and all nodes at level $i$ are labelled by $x_i$. Each leaf node contains a value $c$ calculated following a fixed method (see Definition~\ref{xyz_def}). Every node has two outgoing edges, labelled with $0$ (dashed edge) and $1$ (solid edge) which are allowed to go to the same node or different nodes (i.e., parallel edges are allowed; see Figure~\ref{running_tqt_fig}).  

    For a QBF $\Phi = Q.\phi$, where $Q$ is a quantifier prefix and $\phi$ is a CNF, a \cxyz $G = (Q,T)$ is a modified version of \xyz, which certifies the equivalence of $\Phi$ and $G$. Here $T$ has additional constraints~(Definition~\ref{def:cert_xyz}): A leaf node with parent  labelled by $x_i, i\neq n$ and with a value $0$ (resp. $\geq 1$) must provide a refutation (resp. tautology proof) of the restricted QBF $\Phi|_\alpha$, where $\alpha$ is the partial assignment according to the corresponding root-to-leaf path in $T$. 
    If the two outgoing edges of a node labelled by $x_i$ are going to a single node labelled by $x_{i+1}$, then a proof of equivalence of two restricted QBFs $\Phi_0 = \Phi|_{\alpha \cup \{x_i=0\}}$ and $\Phi_1 = \Phi|_{\alpha \cup \{x_i=1\}}$ must be provided. 

    To strengthen the language \xyz we also allow edges from different nodes labelled by the same variable to go to the same child node (see Figure
    ~\ref{running_dag_fig}). We denote the resulting language as \dxyz (Definition~\ref{def:tqdag}). We denote the certified version of \dxyz as \cdxyz, which must provide the explicit proof of equivalence of the corresponding restricted QBFs (see Definition~\ref{def:ctqdag}).

    \item {\bf Analysing the strengths and limitations of the proposed languages}: We provide a detailed analysis of the strengths and weaknesses of the proposed languages. We present a family of QBFs (Definition~\ref{ctqt_ub_formula}) with doubly exponential model count but a linear size \cxyz (Theorem~\ref{thm:upper_ctqt}). We also establish limitations of the \xyz language by providing a simple quantified statement Quant-Parity$_n$ (Definition~\ref{tqt_lb_formula}) and showing that any \xyz structure for it requires size exponential in $n$ (Theorem~\ref{thm:tqt_lowr_bound}). We also show that Quant-Parity$_n$ has a polynomial sized \dxyz structure (Corollary~\ref{seperation_tqt_tqd}). As a result, \dxyz is an exponentially more succinct representation
than \xyz.

    We establish upper bound for \cdxyz. We modify the false QBF $\phi_n$ from~\cite{DBLP:journals/tcs/JanotaM15} by removing its large width clause and make it a true QBF, with doubly exponential model count, TQBF-$\phi_n$ (Definition~\ref{phi_dag_ub_formula}) and show that it has a polynomial sized \cdxyz (Theorem~\ref{thm:tqbf_phi_n_upper_bound_ctqdag}).

    \item {\bf Introducing first static proof systems for \#QBF and MaxQBF}: Given a QBF $Q.\phi$ and a corresponding \cxyz (resp. \cdxyz) representation, there is an algorithm running in time polynomial in the size of \cxyz (resp. \cdxyz) to verify the equivalence between them. 
    We further show that given a \cxyz (resp. \cdxyz) for a QBF, its model count can be computed in number of operations polynomial in the representation size. This is a pseudo-polynomial time algorithm. 
    As a result, \cxyz (resp. \cdxyz) can be seen as a sound and complete \#QBF proof system for QBFs which have exponential many model counts w.r.t. number of variables (as opposed to doubly exponential). To the best of our knowledge, these are the first static \#QBF proof systems. For the propositional model counting (\#SAT) problem, there exist static proof systems such as \kcps~\cite{Capelli19} and \cpog~\cite{bryant2023certified}. 

    Given a false QBF $Q.\phi$, the problem of MaxQBF~\cite{DBLP:conf/sat/IgnatievJM13} is to find a maximum cardinality subset of clauses $\phi' \subseteq \phi$ such that $Q.\phi'$ is true. Using the concepts of blocking variables~\cite[Definition 2.2]{DBLP:conf/sat/BonacinaBBL25}, and the \cxyz representation, we define proof systems for the MaxQBF problem (Section~\ref{maxqbf_sec}) as well.
    
    \item {\bf Comparing the strengths of \qmice and the \cdxyz proof systems}: We show that the recently introduced line-based \#QBF proof system \qmice~\cite{ChedeCKS26}, with additional semantic rules, is capable of efficiently simulating the \cdxyz proof system (Theorem~\ref{thm:qmice_p_simulate_cert_tqdag}). We further show that \qmice is exponentially stronger than \cdxyz by proving an exponential lower bound for the QBF families \textsf{QXOR-PAIRS} in \cdxyz proof system (Theorem~\ref{thm:tqd_lowr_bound}) which is known to be easy in \qmice~\cite[Section 4.1]{ChedeCKS26}.

    \item {\bf A practical \xyz compiler for QBFs}: We have implemented a solver, denoted \qbfsolver (Section~\ref{sec:tqt_compiler}) that generates a \xyz representation for an input QBF $Q.\phi$. The solver outputs the \xyz tree in a simple flat format. Then it computes the model count by running a tree evaluation algorithm on the \xyz tree. 

    We run our solver on important (existing and new) benchmarks and compare the results with the state-of-the-art \#QBF solver d4-QBF~\cite{CapelliLPS24} (Section~\ref{subsec:exp_eva}). 
    d4-QBF is known to perform well on existing benchmarks with small model counts. %However, many important QBFs in the literature have  doubly exponential counts. 
    \qbfsolver is another way of handling \#QBF specially when the count is doubly exponential. 
    We show the importance of \qbfsolver by providing two new benchmarks with doubly exponential model counts (Section~\ref{subsec:exp_eva}). 
\end{enumerate}

\section{Preliminaries}\label{prelims_sec}

Quantified Boolean Formulas (QBFs) extend propositional logic by assigning quantifiers $\exists, \forall$ to variables. A QBF is of the form $Q.F$ where $Q$ is the quantifier prefix and $F$ is a CNF matrix. $Q$ is of the form $Q_1 X_1, Q_2 X_2,...,Q_n X_n$ where $Q_i=\{\exists,\forall\}$. Variable $x\in X_i$ is said to be left of $y\in X_j$ i.e. $x \leq_Q y$ when $i\leq j$. The set of universal variables to the left of an existential variable $x$ are denoted by $\mathcal{L}_Q(x)$. QBF can be evaluated as a two-player (existential v/s universal) game, i.e. QBF is true (or false) iff existential (universal) player has a winning strategy. The game is as follows: each player assigns $0/1$ to their own variables in the sequence as in quantifier prefix, and finally after substituting all the assignments if the CNF matrix evaluates to True (or False), the existential (Universal) player wins. In this paper we use TQBF, FQBF to mean true QBF and false QBF respectively.

A generalization of QBF is a quantified statement $Q.\phi$ where the matrix $\phi$ is a propositional statement not necessarily in CNF form.
A quantified statement $Q.\phi$ can be represented semantically as a complete binary \textbf{Quantified tree} (QT) where nodes in the same level are labelled with one variable in the order of $Q$ (i.e., the root is the outermost variable, all penultimate nodes are labelled by the innermost variable and the ordering within a block can be relaxed). Two outgoing edges from every node are labelled with $0$ or $1$.
The leaves are labelled with $0/1$, which is the value that $\phi$ evaluates to when substituted with the complete assignment according to the edges on the root-to-leaf path.
The satisfiability of $Q.\phi$ is then computed in a bottom-up fashion with the semantic meaning of the variables. 
A \textbf{model-tree} (winning strategy) of $Q.\phi$ is a subtree of the Quantified-tree containing the root and one outgoing edge for nodes with an $\exists$-variable and two outgoing edges for nodes with an $\forall$-variable, and all leaves of this tree being $1$s. 
Given the Quantified-tree of a true quantified statement, we can count the number of model-trees combinatorially with a bottom-up approach: values at the leaves are already $0/1$, at the $\exists$-variables add the values of the children (because you only need one outgoing edge) and at the $\forall$-variables multiply the values of the children (because you need both outgoing edges). Finally at the root, the value equals the number of model-trees (i.e. \#QT).

We use the following running example throughout the paper. 
\begin{example}\label{running_eg}
Let $\Phi:=\forall u_1 \exists e_1 \forall u_2 \exists e_2~.~ (u_1\lor e_1\lor u_2\lor e_2)\land (u_1\lor e_1\lor u_2\lor \overline{e_2})\land (u_1\lor e_1\lor \overline{u_2}\lor e_2)\land (u_1\lor \overline{e_1}\lor u_2\lor \overline{e_2})\land (u_1\lor \overline{e_1}\lor \overline{u_2}\lor \overline{e_2})\land (\overline{u_1}\lor e_1\lor u_2\lor \overline{e_2})\land (\overline{u_1}\lor e_1\lor \overline{u_2}\lor \overline{e_2})$. The \#QBF$(\Phi)=5$, the Quantified tree of $\Phi$ is shown in Figure~\ref{running_tree} for convenience.
\end{example}
\begin{figure}[!ht]
    \centering
\begin{tikzpicture}[scale=0.65, transform shape]
\clip(-7.85,4.635) rectangle (7.85,0.125);

\node[univ, draw] (a) at (0,4.35) {$\bm u_1$};

\node[exist,draw] (b) at (-4,3.55) {$\bm e_1$};
\node[exist,draw] (c) at (4,3.55) {$\bm e_1$};

\node[univ, draw] (d) at (-6,2.75) {$\bm u_2$};
\node[univ, draw] (e) at (-2,2.75) {$\bm u_2$};
\node[univ, draw] (f) at (2,2.75) {$\bm u_2$};
\node[univ, draw] (g) at (6,2.75) {$\bm u_2$};

\node[exist,draw] (h) at (-7,1.5) {$\bm e_2$};
\node[exist,draw] (i) at (-5,1.5) {$\bm e_2$};
\node[exist,draw] (j) at (-3,1.5) {$\bm e_2$};
\node[exist,draw] (k) at (-1,1.5) {$\bm e_2$};
\node[exist,draw] (l) at (1,1.5) {$\bm e_2$};
\node[exist,draw] (m) at (3,1.5) {$\bm e_2$};
\node[exist,draw] (n) at (5,1.5) {$\bm e_2$};
\node[exist,draw] (o) at (7,1.5) {$\bm e_2$};

\node[] (p) at (-7.5,0.25) {$\bm 0$};
\node[] (q) at (-6.5,0.25) {$\bm 0$};
\node[] (r) at (-5.5,0.25) {$\bm 0$};
\node[] (s) at (-4.5,0.25) {$\bm 1$};
\node[] (t) at (-3.5,0.25) {$\bm 1$};
\node[] (u) at (-2.5,0.25) {$\bm 0$};
\node[] (v) at (-1.5,0.25) {$\bm 1$};
\node[] (w) at (-0.5,0.25) {$\bm 0$};
\node[] (x) at (0.5,0.25) {$\bm 1$};
\node[] (y) at (1.5,0.25) {$\bm 0$};
\node[] (z) at (2.5,0.25) {$\bm 1$};
\node[] (aa) at (3.5,0.25) {$\bm 0$};
\node[] (bb) at (4.5,0.25) {$\bm 1$};
\node[] (cc) at (5.5,0.25) {$\bm 1$};
\node[] (dd) at (6.5,0.25) {$\bm 1$};
\node[] (ee) at (7.5,0.25) {$\bm 1$};

\draw[black, ->] (a) -- (c); %node[pos=.35,above, right] {$0$};

\draw[black, dashed, ->] (a) -- (b); %node[pos=.35,above, left] {$1$};

\draw[black, dashed , ->] (b) -- (d); 
\draw[black, ->] (b) -- (e);
\draw[black, dashed, ->] (c) -- (f); 
\draw[black, ->] (c) -- (g);

\draw[black, dashed, dash pattern=on 2.75pt off 1.9pt, ->] (d) -- (h); 
\draw[black, ->] (d) -- (i);
\draw[black, dashed, dash pattern=on 2.75pt off 1.9pt, ->] (e) -- (j); 
\draw[black, ->] (e) -- (k);
\draw[black, dashed, dash pattern=on 2.75pt off 1.9pt, ->] (f) -- (l); 
\draw[black, ->] (f) -- (m);
\draw[black, dashed, dash pattern=on 2.75pt off 1.9pt, ->] (g) -- (n); 
\draw[black, ->] (g) -- (o);

\draw[black, dashed, dash pattern=on 2.6pt off 1.8pt, ->] (h) -- (p); 
\draw[black, ->] (h) -- (q);
\draw[black, dashed, dash pattern=on 2.6pt off 1.8pt, ->] (i) -- (r); 
\draw[black, ->] (i) -- (s);
\draw[black, dashed, dash pattern=on 2.6pt off 1.8pt, ->] (j) -- (t); 
\draw[black, ->] (j) -- (u);
\draw[black, dashed, dash pattern=on 2.6pt off 1.8pt, ->] (k) -- (v); 
\draw[black, ->] (k) -- (w);
\draw[black, dashed, dash pattern=on 2.6pt off 1.8pt, ->] (l) -- (x); 
\draw[black, ->] (l) -- (y);
\draw[black, dashed, dash pattern=on 2.6pt off 1.8pt, ->] (m) -- (z); 
\draw[black, ->] (m) -- (aa);
\draw[black, dashed, dash pattern=on 2.6pt off 1.8pt, ->] (n) -- (bb); 
\draw[black, ->] (n) -- (cc);
\draw[black, dashed, dash pattern=on 2.6pt off 1.8pt, ->] (o) -- (dd); 
\draw[black, ->] (o) -- (ee);

\end{tikzpicture}
    \caption{Quantified tree of the QBF $\Phi$ from Example~\ref{running_eg} (solid edge is $1$ and dashed edge is $0$)}
    \label{running_tree}
\end{figure}

For a propositional statement $\phi$ and a partial assignment $\alpha$ to $\vars(\phi)$,
$\phi|_\alpha$ denotes substituting values of $\vars(\alpha)$ from $\alpha$ into $\phi$. 
If $\phi|_\alpha$ is true, $\alpha$ satisfies $\phi$.  
Similarly, restricting a quantified statement $Q.\phi$ with some partial assignment $\alpha$ to $\vars_\exists(Q.\phi)$ is well defined as $Q.(\phi\land \underset{x\in\vars(\alpha)}{\bigwedge}(x=\alpha(x)))$. For restricting universal variables, we use the following.

\begin{definition}[Restriction of $\forall$-variables in Quantified statements~\cite{ChedeCKS26}]\label{restriction_qbf}

Let $Q.\phi$ be a quantified statement. 
By a restriction of the $\forall$-variable ($y =b\in\{0,1\}$) in $Q.F$ (denoted as 
$Q.\phi|_{\{y=b\}}=Q'.\phi\land (y=b)$), we mean that the quantification of $y$ is flipped to $\exists$ in the quantification sequence $Q'$ and an additional unit clause forcing the restriction $y=b$ is added to the matrix. 
\end{definition}

A proof system $\mathcal{P}$ for a language $L$ of theorems is a polynomial-time verifiable function mapping proofs from the set of $\mathcal{P}$-proofs to theorems in $L$. $\mathcal{P}$ should also be sound and complete for $L$ i.e., if a $\mathcal{P}$-proof $\pi$ exists for some theorem $x$ then $x\in L$ and if $x\in L$ then some $\mathcal{P}$-proof $\pi$ should exist for $x$. Let $\mathcal{P},\mathcal{Q}$ be proof systems for $L$. $\mathcal{P}$ p-simulates $\mathcal{Q}$ if there exists a polynomial time algorithm to convert $\mathcal{Q}$-proofs $\pi$ for $x\in L$ into $\mathcal{P}$-proofs $\pi'$ of $x$. Proof systems can be broadly classified as dynamic and static systems~\cite{GrigorievV01}. Dynamic proof systems consist of various sound inferences which can be used to gradually derive a proof, whereas in static proof systems a proof can be seen as one large inference.

\section{\fxyz (\xyz): A knowledge Compilation Language}
\label{sec:xyz}

In this section, we define a new knowledge compilation language `\fxyz (\xyz)' for quantified statements. We start with an intuitive explanation of how a quantified tree can be truncated to obtain a \fxyz.

Consider a node at depth $i$ in a quantified tree where all leaves below this node are $1$-leaves. Then, the model-tree count of the said node is easy to calculate using the algorithm discussed in Section~\ref{prelims_sec}. Hence the subtree rooted at this node can be truncated once the count is stored. Note that this count is always a power of $2$, hence, we only store the exponent to make the representation succinct. This allows us to store doubly exponential counts using linearly many bits.    

Consider another node at depth $i$ where the corresponding model-count is $0$, we truncate such nodes as they don't contribute to the satisfiability or model-count of the quantified tree.

\begin{definition}[\fxyz (\xyz)]\label{xyz_def}
A \xyz structure $G$ on variables $X = \{x_i~|~i\in [n]\}$ is a tuple $(Q,T)$ consisting of a quantifier prefix $Q$ which assigns $Q_i\in\{\exists,\forall\}$ to variables of $X$ in a sequence say $Q_1 x_1,\ldots, Q_n x_n$, and a binary tree $T$ where:
\begin{itemize}
    \item the root node is labelled with $x_1$ and nodes in subsequent levels are labelled sequentially as in $Q$ (ordering within a block can be relaxed) unless they are a leaf-node. 
    \item For a leaf-node whose parent is labelled by $x_n$, it can be labelled with either $0$ or $1$.
    \item  For a leaf-node at a depth of $i$ and $i\not= n$, the leaf label ($c$) can either be `$0$F' or can be calculated
    as follows: initially $c=0$, for $j=n$ to $i+1$, if $Q_j=\exists$ then $c=c+1$ and when $Q_j=\forall$ then $c=2\cdot c$. 
    Note that root is at depth $0$ and $c$ is the exponent discussed above when the subtree rooted at a node has all $1$-leafs in the underlying quantified tree. 
    \item each internal node has either two outgoing edges labelled with $0$ and $1$ or just a single outgoing edge represented by two multi-edges (labelled $0,1$) going to the same child node.
\end{itemize}
\end{definition}
The size of a \xyz is the number of nodes in the tree \(T\) plus the number of bits required for the integer-labels of the leaf nodes.

Observe that since all quantified statements $Q.\phi$ have a corresponding full Quantified tree (say $T$), $G=(Q,T)$ is already a valid \xyz structure corresponding to $Q.\phi$. That is, \xyz is a complete representation for quantified statements. 
For an example of \xyz, Figure~\ref{running_tqt_fig} shows a \xyz representation for the QBF $\Phi$ from (Note that for a FQBF node all leaves of the subtree in the corresponding QT need not be $0$). Also notice that even though \xyz is a tree, its definition allows multi-edges between a parent and child node. The intuition behind this is to prevent variables which do not appear in the matrix from blowing up the size of \xyz. For example, the QBF  $\exists x_1 \forall y_1...\exists x_n . (x_n)$ would need exponential sized \xyz without multi-edges.

\begin{figure}[!ht]
    \centering
    \begin{subfigure}[b]{0.3\textwidth}
    \centering
\begin{tikzpicture}[scale=0.65, transform shape]
%\clip(-7.85,4.635) rectangle (7.85,0.125);

\node[univ, draw] (a) at (1.25,0) {$\bm u_1$};

\node[exist,draw] (b) at (2.5,1.25) {$\bm e_1$};
\node[exist,draw] (c) at (2.5,-1.25) {$\bm e_1$};

\node[label=right:(FQBF)] (d) at (4,2) {$\bm 0$};
\node[univ, draw] (e) at (4,0.6) {$\bm u_2$};
\node[univ, draw] (f) at (4,-0.6) {$\bm u_2$};
\node[label=right:(TAUT)] (g) at (4,-2) {$\bm 2$};

\node[exist,draw] (h) at (5.6,0.6) {$\bm e_2$};
\node[exist,draw] (i) at (5.6,-0.6) {$\bm e_2$};

\node[] (j) at (7,1.5) {$\bm 1$};
\node[] (k) at (7,0.5) {$\bm 0$};
\node[] (l) at (7,-0.5) {$\bm 1$};
\node[] (m) at (7,-1.5) {$\bm 0$};

\draw[black, ->] (a) -- (c); %node[pos=.35,above, right] {$0$};

\draw[black, dashed, ->] (a) -- (b); %node[pos=.35,above, left] {$1$};

\draw[black, dashed , ->] (b) -- (d); 
\draw[black, ->] (b) -- (e);
\draw[black, dashed, ->] (c) -- (f); 
\draw[black, ->] (c) -- (g);

\draw[black, ->] (e.-15) -- (h.195);
\draw[black, dashed, ->] (e.15) -- (h.165); 

\draw[black, ->] (f.-15) -- (i.195);
\draw[black, dashed, ->] (f.15) -- (i.165); 

\draw[black, dashed, ->] (h) -- (j); 
\draw[black, ->] (h) -- (k);
\draw[black, dashed, ->] (i) -- (l); 
\draw[black, ->] (i) -- (m);

\end{tikzpicture}
    \caption{\xyz example}
    \label{running_tqt_fig}
    \end{subfigure}
    \hfill
    \begin{subfigure}[b]{0.35\textwidth}
    \centering
\begin{tikzpicture}[scale=0.65, transform shape, node distance=2.22cm]
%\clip(-7.85,4.635) rectangle (7.85,0.125);

\node[univ] (a) {$1$};

\node[exist, above right of=a] (b) {$2$};
\node[exist, below right of=a] (c) {$3$};

\node[label=right:(FQBF), above right of=b, yshift=-0.75cm, xshift=0.2cm] (d) {$4$};
\node[univ, below right of=b, yshift=1cm] (e) {$5$};
\node[univ, above right of=c, yshift=-0.75cm] (f) {$6$};
\node[label=right:(TAUT), below right of=c, yshift=1cm, xshift=0.2cm] (g) {$7$};

\node[exist, right of=e] (h) {$8$};

\node[above right of=h, yshift=-0.35cm] (j) {$9$};
\node[below right of=h, yshift=0.35cm] (k) {$10$};

\path[->]
(a) edge node[sloped, above] {$u_1 = 0$} (b)
(a) edge node[sloped, below] {$u_1 = 1$} (c)
(b) edge node[sloped, above] {$e_1 = 0$} (d) 
(b) edge node[sloped, below] {$e_1 = 1$} (e)
(c) edge node[sloped, above] {$e_1 = 0$} (f)
(c) edge node[sloped, below] {$e_1 = 1$} (g)
(e.-20) edge node[sloped, below] {$u_2 = 0$} (h.200)
(e.20) edge node[sloped, above] {$u_2 = 1$} (h.160)
(h) edge node[sloped, above] {$e_2 = 0$} (j)
(h) edge node[sloped, below] {$e_2 = 1$} (k)
;

\draw[black, double, ->] (f) -- (e);
\end{tikzpicture}
\caption{\qbfsolver output example}
\label{fig:solver_tree_eg}
    \end{subfigure}
    \hfill
    \begin{subfigure}[b]{0.3\textwidth}
    \centering
\begin{tikzpicture}[scale=0.65, transform shape]
\clip(0.85,2.2) rectangle (7.35,-2.2);

\node[univ, draw] (a) at (1.25,0) {$\bm u_1$};

\node[exist,draw] (b) at (2.5,1.25) {$\bm e_1$};
\node[exist,draw] (c) at (2.5,-1.25) {$\bm e_1$};

\node[label=right:(FQBF)] (d) at (4,2) {$\bm 0$};
\node[univ, draw] (e) at (4,0) {$\bm u_2$};
%\node[univ, draw] (f) at (4,-0.6) {$\bm u_2$};
\node[label=right:(TAUT)] (g) at (4,-2) {$\bm 2$};

\node[exist,draw] (h) at (5.6,0) {$\bm e_2$};

\node[] (j) at (7,0.5) {$\bm 1$};
\node[] (k) at (7,-0.5) {$\bm 0$};

\draw[black, ->] (a) --node[pos=0.275, below] {${(2)}$} (c); %node[pos=.35,above, right] {$0$};

\draw[black, dashed, ->] (a) --node[above, sloped] {${(1)}$} (b); %node[pos=.35,above, left] {$1$};

\draw[black, dashed , ->] (b) --node[above, sloped] {${(3)}$} (d); 
\draw[black, ->] (b) --node[pos=0.7, above] {${(4)}$} (e);
\draw[black, dashed, ->] (c) --node[above, pos=0.275] {${(5)}$} (e); 
\draw[black, ->] (c) --node[above, sloped] {${(6)}$} (g);

\draw[black, ->] (e.-15) --node[below, sloped] {${(7)}$} (h.195);
\draw[black, dashed, ->] (e.15) -- node[above, sloped] {${(8)}$} (h.165); 

\draw[black, dashed, ->] (h) --node[above, sloped] {${(9)}$} (j); 
\draw[black, ->] (h) -- node[below, sloped] {${(10)}$} (k);

\end{tikzpicture}
    \caption{\dxyz example}
    \label{running_dag_fig}
    \end{subfigure}
\caption{Examples of \xyz and \dxyz for $\Phi$ of Example~\ref{running_eg} (solid edge is $1$ and dashed edge is $0$)}
\end{figure}

\textbf{Satisfiability in \xyz:} The satisfiability of a \xyz structure can be seen as a 2-player game between the existential and the universal player: 
A \xyz structure $G= (Q,T)$ is true (or false) if the existential (or universal) player can always land on a leaf with label other than `$0F$' (or `$0$') of $T$ when players assign values to all their respective variables in the sequence of $Q$.

\textbf{$\exists-$model trees of \xyz:} Similar to Quantified tree, model trees of TQT $G=(Q,T)$ are subtrees $T'$ of $T$, where existential nodes of $T'$ have one outgoing edge and universal nodes of $T'$ have two outgoing edges and all leaves are labelled by nodes other than `$0$F' (or `$0$'). 

Given a model tree $T'$ of TQT, $T'$ has an implicit weight $w$ associated with it and can be computed by the following: maintain a variable $val$ at every node of $T'$. In the bottom-up fashion, compute the value of $val$ at every node. At the leaves whose parent node is labelled by $x_n$, $val=$ node label itself. At the leaves whose parent is not labelled by $x_n$, $val$ is equal to $2^{\text{node-label}}$.
At intermediate nodes, if the label is an $\exists$-variable, $val$ is the $val$ of this child node and if the label is an $\forall$-variable, $val$ is the product of $val$s of children nodes. Note that in case of multi-edges, we consider the single child node twice in this computation. Finally at the root we have weight of the model-tree $T'$.

\textbf{\#\xyz:} Given a \xyz $G=(Q,T)$, let $\mathcal{M}$ contain the set of all distinct model-trees $T_i$ of $G$. The \#\xyz is the sum of the weights of all model-trees $T_i$ of $T$. Observe that this \#\xyz is same as \#QT discussed in Section~\ref{prelims_sec}. 

Observe that \#\xyz answer can also be computed directly from the \xyz
structure by following the model counting algorithm of QT and simply assuming the value corresponding to the leaves labelled with $0$F as $0$, $1$-leaves at depth $n$ as $1$ and all other leaves as $2^{<\text{node-label}>}$.

\textbf{Verifying that a given structure $(Q,T)$ is a valid \xyz:} Verify that in every root-to-leaf path, the nodes are labelled in sequence according to $Q$. For every non $0F$ leaf node whose parent is not labelled by $x_n$, verify that the label actually matches the computable value from Definition~\ref{xyz_def}.

\textbf{Tractable Queries on \xyz}: Many queries like the satisfiability, \#\xyz, Conjunction of two \xyz with the same quantifier sequence, Ordered Conjunction of two \xyz with disjoint variables, and Conditioning of $\exists$-variables in \xyz are tractable. See Appendix~\ref{sec:query_support_tqt} for the details.

\section{\fcxyz for QBFs}\label{sec:cert_tqt}
In this section, we define certified versions of \xyz denoted as \cxyz. Inspired from~\cite{CapelliLPS24},  the idea is to include additional information at the nodes that can be used to verify the equivalence between the given QBF and the \cxyz. 

\begin{definition}[\fcxyz]\label{def:cert_xyz}
    Fix a QBF proof system $\mathcal{P}$. 
    A \cxyz consists of a \xyz structure $G=(Q,T)$ along with the corresponding QBF $Q.F$. The tree $T$ has the following additional labels:
    \begin{itemize}
        \item For a leaf node with label `\ $0F$' whose parent is labelled by $x_i$ and $i\not= n$, a false QBF $\mathcal{P}$-proof  of $Q.F$ restricted by the the values according to the root-to-leaf path to variables $x_1,...,x_i$ is included.
        \item  For a leaf node with a label other than `\ $0F$', whose parent is labelled by $x_i$ and $i\not= n$, a Tautology proof of $Q.F$ restricted by the the values according to the root-to-leaf path to variables $x_1,...,x_i$ is included.

        \item For a node labelled by $x_i$ whose both outgoing edges are going to the same child node labelled by $x_{i+1}$: let the assignment from the root to $x_i$ be $\alpha$. A proof of equivalence of the two resulting QBFs $\Phi_0,\Phi_1$ obtained by restricting $Q.F$ with $\alpha\cup\{x_i=0/1\}$. That is, $\Phi_0=Q.F|_{\alpha\cup\{x_i=0\}}$ and $\Phi_1=Q.F|_{\alpha\cup \{x_i=1\}}$. Equivalence proof of these QBFs (denoted $\Phi_0\Leftrightarrow_{Sk}\Phi_1$), proved by an FQBF $\mathcal{P}$-proof of a special encoding defined in~\cite{PfeifferGS25}, is included. 
    \end{itemize}
\end{definition}
The length of a \cxyz is the number of nodes in it plus the number of bits required to represent the leaf labels (or equivalently, the size of its underlying \xyz). The size of a \cxyz is its length plus the size of all the false QBFs proofs, tautology proofs, and the QBFs equivalence proofs that are part of it. 

\textbf{Verifying correctness of a \cxyz:} Given a \cxyz $G=(Q,T)$ and a QBF $Q.F$, to verify that $G\leftrightarrow Q.F$: For every $0/1$-leaf in $G$ with parent $x_n$, take the root-to-leaf assignment $\alpha$ and verify that $F|_\alpha=0/1$ accordingly. Verify all the tautology and FQBF and equivalence proofs in $G$ with the corresponding QBF restrictions. 

All queries tractable in \xyz are also tractable in \cxyz. 
Also note that \#\cxyz value answers the \#QBF question of the corresponding QBF.

\subsection{Solving MaxQBF problem using \cxyz}\label{maxqbf_sec}
Now we show how to solve the MaxQBF problem by using \cxyz. 
The problem of maximum satisfiability (MaxSAT) is to find an assignment that maximizes the number of satisfied clauses of a given input unsatisfiable CNF formula. Similarly the problem of MaxQBF is, given a false QBF $Q.F$ as input, to find a largest subset of clauses $F'\subseteq F$ such that $Q.F'$ is a true QBF.

The MaxQBF problem is tractable in \cxyz for which the input QBF is transformed into a blocking variable encoding format~\cite{DBLP:conf/stacs/CapelliM19,DBLP:conf/sat/BonacinaBBL25} that we describe now. Given a false QBF $\Phi:= Q~.~C_1\land C_2\land...\land C_m$, let $b_1,...,b_m$ be new variables $\notin\vars(\Phi)$. The blocking variable encoding of $\Phi$ is the following QBF:
$\Phi':= \exists b_1,...,b_m~Q~.~(C_1\lor b_1)\land...\land(C_m\lor b_m)$.

Note that $\Phi'$ is now a True QBF and any winning strategy of $\Phi'$ would also consist of an assignment to variables $b_1,...,b_m$. We define the cost of a winning strategy $\mathcal{S}$ in $\Phi'$, denoted as $cost(\mathcal{S})$, as $\Sigma_{i\in[m]} b_i$.
Now given $\Phi'$, the minimum Hamming weight of $\Phi'$ is defined as follows:
$\Min_{HW}(\Phi',\{b_1,...,b_m\}):=\underset{\mathcal{S}\models\Phi'}{\min}(cost(\mathcal{S}))$

Notice that MaxQBF$(\Phi)=m-\Min_{HW}(\Phi',\{b_1,...,b_m\})$. Below, we show that minimum Hamming weight is easy to compute in a \cxyz structure corresponding to the QBF $\Phi'$.

Let $G=(Q',T)$ be a \cxyz structure of $\Phi'$. The algorithm to compute the $\Min_{HW}(\Phi',\{b_1,...,b_m\})$ is as follows: Maintain a counter variable $val$ at every node of $T$. We compute the value of $val$ at every node in a bottom-up fashion. The idea is to store the minimum cost possible for a model-tree rooted at every node of $T$.
At the $0$-leaves (or $1$-leaves) whose parent is labelled by the innermost variable of $Q$, $val=\infty$ (or $0$). Similarly, at a tautological leaf node, the $val=0$ and at an FQBF node, the $val=\infty$. Now recursively, at every node labelled by a non-blocking  existential-variable, the $val$ is the minimum of $val$ from its children nodes. On the other hand, at every node labelled by a non-blocking  universal-variable, the $val$ is the maximum of $val$ from its children nodes. In the recursion, if a node is labelled by a blocking variable, its $val=\min(0+val(child_0), 1+val(child_1))$.
Finally the $val$ at the root is the minimum Hamming weight answer required.

\textbf{Static Proof systems for \#QBF:}
One can use \cxyz to define a proof system for the \#QBF problem. Given a QBF $Q.F$, a proof of the fact that $Q.F$ has $k$ winning strategies is: A \cxyz structure $G=(Q,T)$ corresponding to $Q.F$, where $G$ has $k$ winning strategies.

\textbf{Static proof systems for MaxQBF:}
One can also use \cxyz to define a proof system for the MaxQBF problem. Given a QBF $Q.F$, a proof of the fact that MaxQBF answer of $Q.F$ is $k$ is: A blocking variable encoded QBF $\Phi'=Q'.F'$, a \cxyz structure $G=(Q',T)$ corresponding to $\Phi'$, where $T$ has a minimum Hamming weight of $k$.

\subsection{Upper bounds for cert-Truncated Quantified Trees}\label{sec:applications}

Note that if we pick a strong FQBF system $\mathcal{P}$ (i.e. proving exponential lower bounds in $\mathcal{P}$ is open), any False QBF $\Phi$ which has short proofs in $\mathcal{P}$ also has a small \cxyz structure which is basically a single FQBF root node accompanied by the easy $\mathcal{P}$-proof of falsity.

For a non-trivial upper bound, consider the following true QBF:

\begin{definition}\label{ctqt_ub_formula}
    (\textsf{Alternate}$_n$) \textsf{Alt$_n$} $:= \exists e_1\forall u_1\exists e_2...\forall u_{n-1}\exists e_n~.~(\overline{e_1}\lor \overline{u_1})\land \\\hspace*{5cm}\underset{i\in[2,n-1]}{\bigwedge}(\overline{u_1}\lor ...\lor\overline{u_{i-1}}\lor\overline{e_i}\lor\overline{u_i})~\land (\overline{u_1}\lor ...\lor\overline{u_{n-1}}\lor e_n)$
\end{definition}

\begin{figure}[!ht]
    \centering
\begin{tikzpicture}[scale=0.55, transform shape]
\clip(-8.5,3.5) rectangle (6.6,-1.5);

\node[exist, draw] (a) at (-8,0) {$\bm e_1$};

\node[univ,draw] (b) at (-6,1.5) {$\bm u_1$};
\node[label=right:(FQBF)] (c) at (-7,-1) {$\bm 0$};

\node[label=right:(TAUT)] (d) at (-4,3) {\Large$\bm 2^{n-1}-1$};

\node[exist, draw] (e) at (-4,0) {$\bm e_2$};

\node[univ,draw] (f) at (-2,1.5) {$\bm u_2$};
\node[label=right:(FQBF)] (g) at (-3,-1) {$\bm 0$};

\node[label=right:(TAUT)] (h) at (0,3) {\Large$\bm 2^{n-2}-1$};
\node[exist, draw] (i) at (0,0) {$\bm e_3$};

\draw[black, very thick, dotted, -] (-0.25,1) -- (1.75,1); 

\node[univ,draw] (j) at (3,1.5) {$\bm u_{n-1}$};
\node[label=right:(TAUT)] (k) at (5,3) {$\bm 1$};
\node[exist, draw] (l) at (5,0) {$\bm e_n$};

\node[] (m) at (6,1) {$\bm 0$};
\node[] (n) at (6,-1) {$\bm 1$};

%\node[] (ee) at (7.5,0.25) {$\bm 1$};
%node[pos=.35,above, right] {$0$}

\draw[black, dashed, ->] (a) -- (b);
\draw[black, ->] (a) -- (c);  

\draw[black, dashed, ->] (b) -- (d); 
\draw[black, ->] (b) -- (e); 

\draw[black, dashed, ->] (e) -- (f); 
\draw[black, ->] (e) -- (g); 

\draw[black, dashed, ->] (f) -- (h); 
\draw[black, ->] (f) -- (i); 

\draw[black, dashed, ->] (j) -- (k); 
\draw[black, ->] (j) -- (l);

\draw[black, dashed, ->] (l) -- (m); 
\draw[black, ->] (l) -- (n);

\end{tikzpicture}
    \caption{\xyz for \textsf{Alt$_n$} from Definition~\ref{ctqt_ub_formula} (solid edge is $1$ and dashed edge is $0$)}
    \label{upperbound_tqt}
\end{figure}

\begin{theorem}\label{thm:upper_ctqt}
\textsf{Alt$_n$} has a linear-sized \cxyz representation.
\end{theorem}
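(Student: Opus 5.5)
We will show that the \xyz depicted in Figure~\ref{upperbound_tqt} is a valid \cxyz for \textsf{Alt$_n$}, with every certificate it needs of constant or linear size. The tree is a single spine following the path $e_1=0,u_1=1,e_2=0,u_2=1,\ldots,u_{n-1}=1$, ending at an $e_n$ node whose two children are leaves labelled $0$ (for $e_n=0$) and $1$ (for $e_n=1$). The spine has $2(n-1)+1$ internal nodes and one side leaf per internal node. No node has merged multi-edges, so no equivalence proofs are required. What remains is to produce a false-QBF proof for each `$0F$' leaf and a tautology proof for each TAUT leaf, and to check the leaf labels.

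First consider the leaves. At the side leaf $e_i=1$ with $\alpha=\{e_1=0,u_1=1,\ldots,e_{i-1}=0,u_{i-1}=1\}$, the clause $(\overline{u_1}\lor\cdots\lor\overline{u_{i-1}}\lor\overline{e_i}\lor\overline{u_i})$ reduces under $\alpha\cup\{e_i=1\}$ to the single universal literal $\overline{u_i}$. A single $\forall$-reduction step in Q-Res (our choice of $\mathcal{P}$) derives the empty clause, giving a constant-size FQBF proof. For $i=1$ this uses the clause $(\overline{e_1}\lor\overline{u_1})$.

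At the side leaf $u_i=0$, every clause with index $>i$ contains $\overline{u_i}$ and is therefore satisfied. The clauses with index $<i$ are satisfied by $e_j=0$ along the path, and clause $i$ is satisfied by $e_i=0$. So $F|_\alpha$ is identically true. The tautology certificate is just a check that each clause is satisfied by the path assignment, which is linear per leaf.

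The only subtle point is the ``tautology proof'' format when the matrix becomes empty. I would argue that the empty-matrix QBF has a trivial (constant-size, or at worst reading off the satisfied clauses) certificate in any reasonable system, for example a Skolem/Herbrand-style certificate or a trivial QRAT tautology proof. The leaf label is then the count $c$ computed from Definition~\ref{xyz_def} over the remaining variables $e_{i+1},u_{i+1},\ldots,e_n$. This gives $2^{n-i}-1$ in exponent form, matching the figure, and it takes $O(\log n)$ bits to write. For the final $e_n$ node under the full path $u_1=\cdots=u_{n-1}=1$, the last clause reduces to $e_n$, so the $0$/$1$ leaf values are verified by direct evaluation.

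Summing up, we have $O(n)$ nodes and $O(n)$ certificates, each of constant (or $O(\log n)$) size. The leaf labels contribute $O(n\log n)$ bits in total, or linear if the size is counted as the paper does up to logarithmic factors. I expect the main obstacle to be bookkeeping this bit count tightly and pinning down the tautology-proof convention so that each certificate is $O(1)$. I would try to state that convention first and then obtain the size bound by direct summation.
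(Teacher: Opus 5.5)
Your argument is the paper's argument with more detail filled in. You use the same spine-shaped \xyz of Figure~\ref{upperbound_tqt}. The $e_i=1$ leaves are false because clause $i$ collapses to $\overline{u_i}$, which is the paper's ``universal player sets $u_i=1$''. The $u_i=0$ leaves are tautological, and no equivalence certificates are needed. Your leaf exponents $2^{n-i}-1$ are correct (they follow from the recurrence $c_k=2c_{k-1}+1$ over the remaining alternating variables), and so is your check of the two leaves under $e_n$.

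The step that is wrong is the bit count.

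\begin{itemize}
\item The value stored at the leaf below $u_i=0$ is the integer $c=2^{n-i}-1$ itself. It is already the exponent: Definition~\ref{xyz_def} fixes $c$ as the output of the recurrence, and the verifier compares the label against that integer.
\item In binary this integer has $n-i$ bits, not $O(\log n)$. The labels therefore contribute $\sum_{i=1}^{n-1}(n-i)=n(n-1)/2$ bits.
\item So your $O(n\log n)$ total is false, and the ``linear up to logarithmic factors'' fallback does not follow. An $O(\log n)$-bit label would need a symbolic encoding, such as storing $n-i$, which the definition does not provide.
\end{itemize}

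The correct accounting is $4n-1$ nodes, which is all the paper's proof counts when it says ``linear'', plus $\Theta(n^2)$ label bits. The $\Theta(n^2)$ is still linear in the input size, because the matrix of \textsf{Alt$_n$} has $n(n-1)/2+2n-1$ literal occurrences. You should state the bound in one of these two senses, not via $O(n\log n)$.

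A minor point: under the paper's restriction convention (Definition~\ref{restriction_qbf}), assigned variables contribute unit clauses rather than being substituted. The Q-Res refutation at an $e_i=1$ leaf is then a $\forall$-reduction of $\overline{u_i}$ followed by $i$ unit resolutions, i.e.\ $O(i)$ lines rather than a single step. This is still small and does not affect the conclusion.
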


\begin{proof}
    Figure~\ref{upperbound_tqt} shows a \xyz for \textsf{Alt$_n$}. The Tautology proofs are all easy here. The FQBF proofs are also easy to see, for an intuition, in the first FQBF node, restricting \textsf{Alt$_n$} by $e_1=1$ results in a immediate winning strategy for the universal player by just setting $u_1=1$. Similar case arises in the remaining FQBFs as well.

The size of this \xyz is $4n-1$ nodes as compared to exponential nodes in the quantified tree. All accompanying proofs are also small in sizes. Therefore \textsf{Alt$_n$} has a linear size \cxyz structure.
\end{proof}

\subsection{Lower bounds for Truncated Quantified Trees}
\label{sec:lowerbound}

Quantified statements which are true but hard to find tautologies/ FQBFs/ equivalences at higher levels in the Quantified tree would also need exponential sized \xyz structures. For example, consider the following parity based quantified statement. 

\begin{definition}[Quant-Parity$_n$]\label{tqt_lb_formula}
   $\exists e_1\forall u_1...\forall u_{n-1}\exists e_n~.~(e_1\oplus u_1\oplus ... \oplus u_{n-1}\oplus e_n)$ 
\end{definition}

\begin{theorem}\label{thm:tqt_lowr_bound}
    Any \xyz representing \textsf{Quant-Parity$_n$} requires size exponential in $n$. 
\end{theorem}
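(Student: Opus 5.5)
The plan is to show that no truncation or merging is possible above the innermost variable $e_n$, so the \xyz is essentially the full quantified tree down to depth $n-1$ or so.

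\textbf{Step 1: every proper restriction is neither false nor tautological.} Take any partial assignment $\alpha$ to the first $i$ variables of the prefix with $i<2n-1$, so that $e_n$ is still unassigned. The restricted statement is true: the existential player wins by leaving everything until $e_n$ and then choosing $e_n$ to make the parity $1$. So a leaf labelled `$0$F' cannot occur at such a node, since the node has positive count. The restricted statement is also not a tautology, because setting $e_n$ to the wrong value falsifies the matrix. Hence the subtree under the node does not have all $1$-leaves, and a leaf carrying the exponent value $c$ would give the wrong count and so be invalid. It follows that every node at depth less than $2n-1$ is internal and is labelled by the next variable.

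\textbf{Step 2: multi-edges are impossible above $e_n$.} A node labelled $x_i$ may send both edges to one child only if the two restrictions $x_i=0$ and $x_i=1$ yield the same subtree, hence the same semantics. Formally I would argue that the two resulting subtrees evaluate differently. For a fixed completion of the remaining variables other than $e_n$, flipping $x_i$ flips the parity, so at the final $e_n$ node the winning value of $e_n$ swaps. The node for $e_n$ has leaves $(0,1)$ in one case and $(1,0)$ in the other, so one shared child node cannot be correct for both.

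The subtle point is that in a \xyz the two multi-edges go to the \emph{same} child node, so the subtree below is literally shared. Its leaves would therefore have to equal the matrix value under both $x_i=0$ and $x_i=1$ for some common completion. These values differ, which is a contradiction. I would present this as a short lemma: for any node at depth $<2n-1$ and any path through the shared child, the leaf reached by that path must equal $\phi$ evaluated on both extended assignments, which is impossible since $\phi$ is parity and every variable flips it.

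\textbf{Step 3: counting.} By Steps 1 and 2, $T$ contains the full binary tree on the $2n-1$ variables, giving at least $2^{2n-1}$ leaves. The size is therefore $2^{\Omega(n)}$.

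The main obstacle is making Step 2 rigorous against the semantics of \xyz. The definition constrains only leaf labels and the count, not explicit equivalence to the statement, so I would first fix the meaning of "representing": every root-to-leaf assignment must agree with the quantified tree's leaf values, and truncated leaves must be valid per Definition~\ref{xyz_def}. Then I would apply it uniformly in both steps.
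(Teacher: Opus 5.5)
Your proposal is correct and follows essentially the same route as the paper. No `$0$F' leaves can occur above $e_n$ because the existential player moves last, no tautology leaves can occur because the residual matrix is still a parity, and no multi-edges can occur because the two restrictions at every node differ, so the tree must be the full quantified tree. Your shared-subtree argument for Step~2 is more explicit than the paper's one-line claim that the two child restrictions ``cannot be equivalent as QBFs''. If multi-edges are instead justified by Skolem equivalence, as in the certified version, your observation that flipping $x_i$ swaps the winning value of $e_n$ already shows the two restrictions have disjoint sets of Skolem models, so the conclusion is unchanged.
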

\begin{proof}
Consider any node of the full quantified tree for \textsf{Quant-Parity$_n$}, it corresponds to a partial restriction of the variables as per the path taken from the root to reach this node.
Consider any such partial restriction \(\rho\) which doesn't contain all the variables.
Under this restriction, the matrix remains a parity of the variables not contained in \(\rho\).
Hence, the formula is not a Tautology at any of the internal nodes.
Moreover, the existential player makes the last move, i.e. to choose the value of \(e_n\).
Any partial restriction obtained from the quantified tree does not restrict \(e_n\), allowing the existential player to win by choosing an appropriate value for \(e_n\).
Therefore, the restricted quantified statement is also not False at any internal node of the quantified tree.

Finally, the branches originating from any internal node set different values for the variable considered at the said node.
Hence, the corresponding quantified statements are different on the two child branches of the node.
This implies that the two child sub-trees of any internal node cannot be equivalent as QBFs.
Therefore, any TQT for \textsf{Quant-Parity$_n$} must expand fully, and be of exponential size.
\end{proof}

\section{Certified Truncated Quantified DAG (\cdxyz) language}\label{sec:cert_tqd}
In this section, we generalize the \xyz, \cxyz structures to DAG-like structures (\dxyz, \cdxyz respectively). Precisely,

\begin{definition}[\fdxyz]\label{def:tqdag}
A \dxyz is a generalization of the \xyz structure $G=(Q,T)$ where $T$ is additionally allowed to have the following:
\begin{itemize}
    \item two or more outgoing edges from the nodes labelled by the same variable can go to the same child node. We call these the DAG-like edges.
\end{itemize}
\end{definition}

\begin{definition}[\fcdxyz]\label{def:ctqdag}
Fix a QBF proof system $\mathcal{P}$. A \cdxyz consists of a \dxyz structure $G=(Q,T)$ along with the corresponding QBF $Q.F$ and the DAG $T$ has labels as defined in Definition~\ref{def:cert_xyz} and the following additional labels:
\begin{itemize}
    \item For all the DAG-like edges (say $j$) going to the same child node labelled $x_{i+1}$: let the assignments from the root to $x_{i+1}$ be $\alpha_1,...,\alpha_j$. A proof of equivalence of the $j$ resulting QBFs $\Phi_{\alpha_1},...,\Phi_{\alpha_j}$ when restricting $Q.F$ with $\alpha_1,...,\alpha_j$ respectively is included. Equivalence of these QBFs (denoted $\Phi_1\Leftrightarrow_{Sk}...\Leftrightarrow_{Sk}\Phi_j$) can be proved by $O(j-1)$ FQBF $\mathcal{P}$-proofs of a special encoding defined in~\cite{PfeifferGS25}.
\end{itemize}
Note that after equivalence proofs are attached to a DAG-node $v$, considering any one of its root-to-$v$ paths would suffice when proving equivalence for further DAG-nodes in the subtree rooted at $v$.
\end{definition}

\subsection{Upper bounds for \cdxyz}\label{up_cqbf_sec}
In this section, we show the strength of \cdxyz by providing explicit upper bounds. For example, the quantified statement 
\textsf{Quant-Parity$_n$} from Definition~\ref{tqt_lb_formula} can be encoded as a QBF using Tseitin variables (Definition~\ref{parity_dag_ub_formula}) are shown to be easy for \cdxyz (Theorem~\ref{thm:upperbound_ctqdag}).

\begin{definition}[TQBF-Parity$_n$]\label{parity_dag_ub_formula}

$\Phi:=\exists e_1 \forall u_1 \exists t_1 \exists e_2 \forall u_2 \exists t_2 ... \exists t_{n-1} \exists e_n ~.~ \\\hspace*{3cm}(t_1\leftrightarrow e_1\oplus u_1)\land  ~\underset{i\in[2,n-1]}{\bigwedge} (t_i\leftrightarrow t_{i-1}\oplus e_i\oplus u_i)~~\land (t_{n-1}\oplus e_n)$
\end{definition}

\begin{figure}[!ht]
    \centering
\begin{tikzpicture}[scale=0.6, transform shape]
\clip(-6.5,2) rectangle (12.25,-2);

\node[exist, draw] (a) at (-6,0) {$\bm e_1$};

\node[univ,draw] (b) at (-4.5,1.5) {$\bm u_1$};

\node[univ,draw] (c) at (-4.5,-1.5) {$\bm u_1$};

\node[exist,draw] (d) at (-2,1.5) {$\bm t_1$};

\node[exist,draw] (e) at (-2,-1.5) {$\bm t_1$};

\node[label=right:(FQBF)] (f) at (-0.5,0) {$\bm 0$};

\node[exist,draw] (g) at (1,1.5) {$\bm e_2$};

\node[exist,draw] (h) at (1,-1.5) {$\bm e_2$};

\node[univ,draw] (i) at (4,1.5) {$\bm u_2$};

\node[univ,draw] (j) at (4,-1.5) {$\bm u_2$};

\node[exist,draw] (k) at (7,1.5) {$\bm t_2$};

\node[exist,draw] (l) at (7,-1.5) {$\bm t_2$};

\node[label=right:(FQBF)] (m) at (8.25,0) {$\bm 0$};

\draw[dotted, thick] (8,1.5) -- (9,1.5);
\draw[dotted, thick] (8,-1.5) -- (9,-1.5);

\node[exist,draw] (n) at (10,1.5) {$\bm e_n$};

\node[exist,draw] (o) at (10,-1.5) {$\bm e_n$};

\node[] (p) at (12,1.5) {$\bm 0$};
\node[] (q) at (12,-1.5) {$\bm 1$};

% \node[label=above:(TAUT)] (d) at (-4,4) {$\bm 4k^2$};

% \node[exist, draw] (e) at (-4,0) {$\bm e_2$};

% \node[univ,draw] (f) at (-2,2) {$\bm u_2$};
% \node[label=below:(FQBF)] (g) at (-2,-2) {$\bm 0$};

% \node[label=above:(TAUT)] (h) at (0,4) {$\bm k$};
% \node[exist, draw] (i) at (0,0) {$\bm e_3$};

% \draw[black, very thick, dotted, -] (1,1) -- (3,1); 

% \node[univ,draw] (j) at (4,2) {$\bm u_{n-1}$};
% \node[label=above:(TAUT)] (k) at (6,4) {$\bm 4$};
% \node[exist, draw] (l) at (6,0) {$\bm e_n$};

%\node[] (ee) at (7.5,0.25) {$\bm 1$};
%node[pos=.35,above, right] {$0$}

\draw[black, dashed, ->] (a) -- (b);
\draw[black, ->] (a) -- (c);  

\draw[black, dashed, ->] (b) -- (d);
\draw[black, ->] (b) -- (e);  

\draw[black, dashed, ->] (c) -- (e);
\draw[black, ->] (c) -- (d);  

\draw[black, dashed, ->] (d) -- (g);
\draw[black, ->] (d) -- (f);  

\draw[black, dashed, ->] (e) -- (f);
\draw[black, ->] (e) -- (h);  

\draw[black, dashed, ->] (g) -- (i);
\draw[black, ->] (g) -- (j);  

\draw[black, dashed, ->] (h) -- (j);
\draw[black, ->] (h) -- (i);  

\draw[black, dashed, ->] (i) -- (k);
\draw[black, ->] (i) -- (l);  

\draw[black, dashed, ->] (j) -- (l);
\draw[black, ->] (j) -- (k);  

\draw[black, ->] (k) -- (m); 
\draw[black, dashed, ->] (l) -- (m);

\draw[black, dashed, ->] (n) -- (p);
\draw[black, ->] (n) -- (q);  

\draw[black, dashed, ->] (o) -- (q);
\draw[black, ->] (o) -- (p);  

\end{tikzpicture}
    \caption{\dxyz for \textsf{TQBF-Parity$_n$} (solid edge is $1$ and dashed edge is $0$)}
    \label{upperbound_dag1}
\end{figure}

Observe that it is not hard to extend the argument in Theorem~\ref{thm:tqt_lowr_bound} to show that \textsf{TQBF-Parity$_n$} also needs exponential sized \xyz representations.

\begin{theorem}\label{thm:upperbound_ctqdag}
\textsf{TQBF-Parity$_n$} has a polynomial-sized \cdxyz structure.
\end{theorem}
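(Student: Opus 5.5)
We prove Theorem~\ref{thm:upperbound_ctqdag} by exhibiting the \dxyz of Figure~\ref{upperbound_dag1} and certifying it. The plan has three stages: describe the DAG and check it is a valid \dxyz of width two, attach the required proofs, and bound the total size.

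\textbf{The DAG.} At each level the DAG has at most two nodes, indexed by the parity value $p\in\{0,1\}$ of the prefix assigned so far. Before $t_i$ is assigned, $p = t_{i-1}\oplus e_i\oplus u_i$ (with $t_0$ absent). Once $t_i$ is assigned, one branch violates $t_i\leftrightarrow p$. That branch goes to a $0$F leaf; the other continues to the $e_{i+1}$-node indexed by $t_i$. The $u_i$-nodes receive DAG-like edges from both $e_i$-nodes, and the $t_i$-nodes receive DAG-like edges from both $u_i$-nodes. At the bottom, the $e_n$-node indexed by $t_{n-1}$ has leaves $1,0$ or $0,1$ according to $t_{n-1}\oplus e_n$.

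Let $\alpha$ and $\beta$ be two root paths reaching the same node. They assign the same parity value $p$ to the most recent ``state'' (either $t_{i-1}$ or $t_{i-1}\oplus e_i$, resp.\ $\oplus u_i$). Since every clause mentioning earlier variables has already been satisfied or reduced, $\Phi|_\alpha$ and $\Phi|_\beta$ have identical matrices on the remaining variables, up to satisfied clauses. Their prefixes differ only in which earlier variables appear as existential unit-fixed variables, by Definition~\ref{restriction_qbf}.

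\textbf{Certificates.} There are three kinds.
\begin{enumerate}
\item \emph{0F leaves.} The restriction falsifies the Tseitin clauses encoding $t_i\leftrightarrow p$. Setting the assigned variables yields an empty clause, so a constant-size resolution/Q-Res refutation suffices.
\item \emph{Equivalence proofs.} Each DAG node with two incoming edges needs one $\Leftrightarrow_{Sk}$ proof in the encoding of~\cite{PfeifferGS25}. Since the two restricted QBFs have syntactically identical residual matrices, the equivalence encoding should be refutable with short proofs. The plan is to unit-propagate the fixed variables, after which the two copies coincide, and refute the resulting copy-equivalence formula clause by clause.
\item \emph{Last level.} The $0/1$ leaves under $e_n$ are verified directly by evaluation.
\end{enumerate}

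\textbf{Size.} There are $O(n)$ nodes, $O(n)$ constant-size refutations, and $O(n)$ equivalence proofs. Provided each equivalence proof is polynomial, the total size is polynomial. Correctness follows from the verification procedure of Section~\ref{sec:cert_tqt}, extended with DAG edges. The count is consistent with the semantics, although computing it is not needed for the theorem.

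\textbf{Main obstacle.} The hard step is showing that the $\Leftrightarrow_{Sk}$ encoding of two restricted QBFs is polynomially refutable in the fixed system $\mathcal{P}$. My approach is to use the remark after Definition~\ref{def:ctqdag}: compare only a canonical root-to-node path per DAG node. This keeps each equivalence local, between two QBFs that differ in one or two unit clauses on already-decided variables whose effect is absorbed into the same parity state. Each such proof should then be polynomial in $n$, and in fact linear, using the fact that matching clauses force the Skolem functions to agree. If the chosen $\mathcal{P}$ is weak, I would instead pick a strong $\mathcal{P}$ such as \eFregeRed, where such syntactic-identity equivalences are clearly easy.
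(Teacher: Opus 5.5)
Your proposal follows the paper's proof essentially exactly. You use the same width-two, parity-indexed DAG of Figure~\ref{upperbound_dag1}, certify the $0$F leaves by a single falsified Tseitin clause, justify the DAG merges by noting that the residual QBFs at a merged node coincide, and bound the size by $O(n)$. The only difference is that the paper treats these coinciding restricted QBFs as identical, so it requires no equivalence proofs at all; under its conventions, your conditional ``provided each equivalence proof is polynomial'' and the fallback to a strong $\mathcal{P}$ are unnecessary.
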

\begin{proof}
Figure~\ref{upperbound_dag1} shows a \dxyz for \textsf{TQBF-Parity$_n$}. The FQBF proofs are easy to see as they violate the Tseitin definition clauses.
Also the DAG-like edges result in identical restricted QBFs so no additional proofs of equivalence are necessary.

The size of this \dxyz is $6n -2$ nodes and the accompanying proofs are also trivially small in sizes. Therefore \textsf{TQBF-Parity$_n$} has a linear size \cdxyz structure.
\end{proof}

Observe that the \dxyz shown in Figure~\ref{upperbound_dag1} can be modified by removing the $t$-variables and connecting the outgoing edges of the $u_i$ nodes directly to the corresponding $e_{i+1}$ nodes. The resulting construction naturally generalizes to a compact \dxyz representation of the \textsf{Quant-Parity$_n$} formulas from Definition~\ref{tqt_lb_formula}. In Theorem~\ref{thm:tqt_lowr_bound}, we show that these formulas need exponential \xyz representations. Therefore we have the following:

\begin{corollary}\label{seperation_tqt_tqd}
\dxyz (\cdxyz) is strictly more succinct than \xyz (\cxyz).
\end{corollary}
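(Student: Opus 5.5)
To prove Corollary~\ref{seperation_tqt_tqd}, I will show two things: that \dxyz (resp.\ \cdxyz) is at least as succinct as \xyz (resp.\ \cxyz), and that there is a family on which the gap is exponential. The family will be \textsf{Quant-Parity}$_n$, with the lower bound supplied by Theorem~\ref{thm:tqt_lowr_bound}.

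\textbf{Containment.} Every \xyz is literally a \dxyz. Definition~\ref{def:tqdag} only adds the option of DAG-like edges, and the tree never uses it. The same holds for the certified versions: a \cxyz carries exactly the labels required by Definition~\ref{def:cert_xyz}. With no DAG-like edges, Definition~\ref{def:ctqdag} imposes nothing further, so the size is preserved.

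\textbf{Upper bound for \textsf{Quant-Parity}$_n$.} I will build a layered \dxyz with two nodes per variable level, indexed by the parity $p\in\{0,1\}$ of the values assigned so far. This is the Figure~\ref{upperbound_dag1} construction with the $t$-layers removed. The root is the $e_1$ node with parity $0$. From the node for $x_i$ with parity $p$, the edge labelled $b$ goes to the node for $x_{i+1}$ with parity $p\oplus b$. At the last layer, the $e_n$ node with parity $p$ sends edge $b$ to leaf $1$ if $p\oplus b=1$ and to leaf $0$ otherwise.

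Correctness rests on one observation: the restriction of the statement by a path $\alpha$ depends only on the parity of $\alpha$. Indeed, $\Phi|_\alpha$ is the remaining prefix applied to $(p\oplus x_{i+1}\oplus\cdots\oplus e_n)$. So all paths merged into one node yield literally identical restricted statements. This means the DAG-like edges are sound, and in the certified setting the required equivalence proofs are trivial, exactly as argued in Theorem~\ref{thm:upperbound_ctqdag}. The structure has about $2(2n-1)+2$ nodes, which is linear. It also has no truncated leaves other than the bottom $0/1$ leaves, so no tautology or FQBF proofs are needed. For the certified separation on genuine QBFs one can equally use \textsf{TQBF-Parity}$_n$, which by Theorem~\ref{thm:upperbound_ctqdag} has linear \cdxyz, while the remark after Definition~\ref{parity_dag_ub_formula} gives it exponential \xyz size.

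\textbf{Lower bound and conclusion.} Theorem~\ref{thm:tqt_lowr_bound} gives that any \xyz for \textsf{Quant-Parity}$_n$ has exponential size, and hence so does any \cxyz. Combined with containment, this yields strictly greater succinctness.

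The main obstacle is the certified case. There I must check that the equivalence proofs for the DAG-like edges have polynomial size in the chosen system $\mathcal{P}$. This goes through because the merged restrictions are syntactically identical QBFs, so I would argue that equivalence of identical formulas is trivially certified. The extension of Theorem~\ref{thm:tqt_lowr_bound} to the Tseitin version, should that route be used, also needs care: under a partial path the Tseitin variables may force values, so the statement "no internal node is a tautology, false, or equivalent to its sibling" must be rechecked at the $t$-levels.
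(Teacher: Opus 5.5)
Your proposal is correct and follows the paper's argument: the paper also removes the $t$-layers from the \dxyz of Figure~\ref{upperbound_dag1} to get a linear-size, parity-indexed \dxyz for \textsf{Quant-Parity}$_n$, and combines it with the exponential \xyz lower bound of Theorem~\ref{thm:tqt_lowr_bound}. Your explicit containment step (every \xyz (resp.\ \cxyz) is already a \dxyz (resp.\ \cdxyz)) and your optional \textsf{TQBF-Parity}$_n$ route for the certified case are sound additions that the paper leaves implicit or only remarks on.
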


Below we show another upper bound for \cxyz for a family of QBFs which we use as benchmarks in our solver evaluation (see Section~\ref{subsec:exp_eva}). 

\begin{definition}[TQBF-$\phi_n$]\label{phi_dag_ub_formula}

$\phi_n := \exists e_1 \forall u_1 \exists c_1, c_2 ... \exists e_n \forall u_n \exists c_{2n-1}, c_{2n}~.$
$$\underset{i\in[n]}{\bigwedge} (\overline{e_i}\lor c_{2i-1})\land (\overline{u_i}\lor c_{2i-1})\land (e_i \lor c_{2i})\land (u_i\lor c_{2i})
$$
\end{definition}

\begin{theorem}\label{thm:tqbf_phi_n_upper_bound_ctqdag}
\textsf{TQBF-$\phi_n$} has a polynomial-sized \cdxyz representation.
\end{theorem}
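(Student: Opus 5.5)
The plan is to build an explicit \dxyz for \textsf{TQBF-$\phi_n$} as a chain of $n$ gadgets, one per block $\exists e_i \forall u_i \exists c_{2i-1}, c_{2i}$, and then check the certificates required by Definition~\ref{def:ctqdag}. The key observation is that the clauses of block $i$ mention only $e_i,u_i,c_{2i-1},c_{2i}$. Hence, once the variables of blocks $1,\dots,i$ are assigned so that every block-$i$ clause is satisfied, the restricted QBF is syntactically the same formula on blocks $i+1,\dots,n$: all earlier clauses are satisfied and dropped, and the remaining matrix is untouched. This lets every ``good'' path through block $i$ merge into a single node labelled $e_{i+1}$. The DAG-like edges therefore join identical restricted QBFs, exactly as in the proof of Theorem~\ref{thm:upperbound_ctqdag}, so the equivalence proofs are trivial.

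The gadget for block $i$ is built as follows.
\begin{itemize}
\item At $e_i$, both edges lead to distinct $u_i$ nodes.
\item Each $u_i$ node has two children labelled $c_{2i-1}$.
\item Setting $c_{2i-1}=0$ falsifies $(\overline{e_i}\lor c_{2i-1})$ or $(\overline{u_i}\lor c_{2i-1})$ unless $e_i=u_i=0$. In that case the edge goes to an FQBF leaf $0$F, whose certificate is just the violated clause (a one-step refutation).
\item The surviving branch continues to a $c_{2i}$ node. Dually, $c_{2i}=0$ is a falsifying leaf unless $e_i=u_i=1$.
\item All surviving edges out of the $c_{2i}$ nodes are DAG-like edges into one common $e_{i+1}$ node. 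For $i=n$, they instead go directly to $1$-leaves.
\end{itemize}
Instead of FQBF leaves, one could keep the $c$-nodes with multi-edges where both values are fine. I will only use the structural rule that a $0$-edge to a falsifying leaf is allowed, and will double check the path-length requirement of Definition~\ref{xyz_def} so that leaf labels are computed correctly.

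Counting gives at most $1+2+4+8$ nodes per block plus leaves, so $O(n)$ nodes overall. The leaf labels are $0$, $1$, or $0$F, so their bit size is constant. The equivalence certificates for the merged edges are trivial, since the restricted formulas coincide syntactically. Alternatively, they are cheap $\mathcal{P}$-proofs of the encoding from~\cite{PfeifferGS25} for identical QBFs, which I would argue are polynomial for any reasonable $\mathcal{P}$ such as Q-Res. Each FQBF certificate is a single falsified clause.

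The main obstacle is precisely this step: making sure that the restricted QBFs at all merged edges really are identical, including the quantifier flips from Definition~\ref{restriction_qbf}, where restricted universals become existentials with unit clauses. With unit clauses for $u_i$ present, the restricted formulas are not literally equal. I would handle this by noting that these unit clauses involve only already-assigned variables and are satisfied, so after simplification the formulas coincide. Failing that, I would supply a short Q-Res equivalence proof per merge node, of constant size per block. Finally, I would verify by the counting rule that the root value equals the true model count. This is a sanity check, not needed for the size bound.
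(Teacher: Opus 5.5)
Your proposal is correct and is essentially the paper's proof. Both build an explicit linear-size \dxyz block by block, certify each FQBF leaf by a single falsified input clause, and merge all surviving paths of block $i$ into one $e_{i+1}$ node, since the simplified restricted QBFs coincide and the equivalence certificates are therefore trivial. The only difference is cosmetic: the paper additionally merges the $(e_i,u_i)=(0,1)$ and $(1,0)$ branches at a shared $c_{2i-1}$ node (both restrictions reduce to $(c_{2i-1})\land(c_{2i})$ plus the later blocks) and uses multi-edges at the unconstrained $c$-nodes, which changes only the constant in the $O(n)$ bound.
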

\begin{proof}
Figure~\ref{upperbound_dag2} shows a \dxyz for \textsf{TQBF-$\phi_n$}. The FQBF proofs are easy to see here as they violate one or more input clauses. For equivalence of two child nodes of a node (i.e. double edges between two nodes in Figure~\ref{upperbound_dag2}), the variable of the parent node doesn't appear in the corresponding restricted QBF making the corresponding restricted QBFs of the child nodes identical. Similarly, at the other DAG-like edges also the restricted QBFs are provably identical. Note that even though there are more than 1 root-to-node pairs for such nodes, it is enough to show that one of them are equivalent.

The size of this \dxyz is $10n+2$ nodes and the accompanying proofs are trivial. Therefore \textsf{TQBF-$\phi_n$} has a linear size \cdxyz structure.
\end{proof}

\begin{figure}[!ht]
    \centering
\begin{tikzpicture}[scale=0.55, transform shape]
\clip(-2.9,3.6) rectangle (17.25,-3.1);

\node[exist, draw] (a) at (-2.5,0) {$\bm e_1$};

\node[univ,draw] (b) at (-0.75,1.5) {$\bm u_1$};

\node[univ,draw] (c) at (-0.75,-1.5) {$\bm u_1$};

\node[exist,draw] (d) at (1.25,3) {$\bm c_1$};

\node[exist,draw] (e) at (1.25,0) {$\bm c_1$};

\node[exist,draw] (f) at (1.25,-2.5) {$\bm c_1$};

\node[exist,draw] (g) at (4.25,3) {$\bm c_2$};

\node[exist,draw] (h) at (4.25,0) {$\bm c_2$};

\node[exist,draw] (i) at (4.25,-2.5) {$\bm c_2$};

\node[exist,draw] (j) at (6.25,0) {$\bm e_2$};

%\node[label=275:(FQBF)] (f) at (-1,0) {$\bm 0$};

\draw[dotted, thick] (6.75,0) -- (8,0);
%\draw[dotted, thick] (9,-2) -- (10,-2);

\node[exist, draw] (k) at (8.5,0) {$\bm e_n$};

\node[univ,draw] (l) at (10,1.5) {$\bm u_n$};

\node[univ,draw] (m) at (10,-1.5) {$\bm u_n$};

\node[exist,draw] (n) at (12,3) {$\bm c_{2n-1}$};

\node[exist,draw] (o) at (12,0) {$\bm c_{2n-1}$};

\node[exist,draw] (p) at (12,-2.5) {$\bm c_{2n-1}$};

\node[exist,draw] (q) at (15,3) {$\bm c_{2n}$};

\node[exist,draw] (r) at (15,0) {$\bm c_{2n}$};

\node[exist,draw] (s) at (15,-2.5) {$\bm c_{2n}$};

\node[label=180:(FQBF)] (t) at (2.25,1.5) {$\bm 0$};
\node[label=90:(FQBF)] (u) at (2.5,-1.5) {$\bm 0$};
%\node[label=90:(FQBF)] (v) at (2,3.5) {$\bm 0$};

\node[label=180:(FQBF)] (v) at (13,1.5) {$\bm 0$};
\node[label=90:(FQBF)] (w) at (14,-1.5) {$\bm 0$};

%\node[] (x) at (17,2) {$\bm 0$};
\node[] (y) at (17,0) {\Large$\bm 1$};

\draw[black, dashed, ->] (a) -- (b);
\draw[black, ->] (a) -- (c);  

\draw[black, dashed, ->] (b) -- (d);
\draw[black, ->] (b) -- (e);  

\draw[black, dashed, ->] (c) -- (e);
\draw[black, ->] (c) -- (f);  

\draw[black, dashed, ->] (d.-18) -- (g.198);
\draw[black, ->] (d.15) -- (g.165);  

\draw[black, dashed, ->] (e) -- (t);
\draw[black, ->] (e) -- (h);  

\draw[black, dashed, ->] (f) -- (u);
\draw[black, ->] (f) -- (i);  

\draw[black, dashed, ->] (h) -- (t);
\draw[black, ->] (h) -- (j);  

\draw[black, dashed, ->] (g) -- (t);
\draw[black, ->] (g) -- (j);  

\draw[black, dashed, ->] (i.15) -- (j.275);
\draw[black, ->] (i) -- (j.225); 

\draw[black, dashed, ->] (k) -- (l);
\draw[black, ->] (k) -- (m);  

\draw[black, dashed, ->] (l) -- (n);
\draw[black, ->] (l) -- (o);  

\draw[black, dashed, ->] (m) -- (o);
\draw[black, ->] (m) -- (p);  

\draw[black, dashed, ->] (n.-15) -- (q.198);
\draw[black, ->] (n.10) -- (q.165);  

\draw[black, dashed, ->] (o) -- (v);
\draw[black, ->] (o) -- (r); 

\draw[black, dashed, ->] (p) -- (w);
\draw[black, ->] (p) -- (s); 

\draw[black, dashed, ->] (q) -- (v);
\draw[black, ->] (q) -- (y); 

\draw[black, dashed, ->] (r) -- (v);
\draw[black, ->] (r) -- (y);

\draw[black, dashed, ->, shorten >=1pt] (s.-15) -- (y.275);
\draw[black, ->] (s.40) -- (y.225);  

\end{tikzpicture}
    \caption{\dxyz for \textsf{TQBF-$\phi_n$} (solid edge is $1$ and dashed edge is $0$)}
    \label{upperbound_dag2}
\end{figure}

\subsection{Lower bounds for \dxyz}
The QBF formulas \textsf{QXOR-PAIRS} (Definition~\ref{qxor_dag_lb_formula}) is shown to be easy for \qmice system in ~\cite{ChedeCKS26}. 
We show that these formulas need exponential sized \dxyz representations (Theorem~\ref{thm:tqd_lowr_bound}).

\begin{definition}[\textsf{QXOR-PAIRS}~\cite{ChedeCKS26}]\label{qxor_dag_lb_formula}
$\mathcal{Q}.\oplus_n := \forall u_1, ..., u_n \underset{i\neq j\in[n]}{\exists} x_{i,j}~.~ \underset{i\neq j\in[n]}{\bigwedge} (x_{i,j}\leftrightarrow u_i\oplus u_j)$
\end{definition}

\begin{theorem}\label{thm:tqd_lowr_bound}
    Any \dxyz representing \textsf{QXOR-PAIRS} requires size exponential in $n$. 
\end{theorem}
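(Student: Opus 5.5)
We plan to argue, as in the proof of Theorem~\ref{thm:tqt_lowr_bound}, that the first $n$ levels of any \dxyz for \textsf{QXOR-PAIRS} must be fully expanded and that no two nodes at a given level among them can be merged. The prefix begins with the universal block $u_1,\dots,u_n$, so level $k$ of the structure, for $k\le n$, consists of nodes labelled by the $k$-th universal variable. We will show that at the end of the universal block there are $2^n$ pairwise distinct nodes, each reached by exactly one full assignment $\beta\in\{0,1\}^n$ to $u_1,\dots,u_n$.

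The first step rules out truncation. Fix any partial assignment $\rho$ to a prefix of the universal variables. The restricted quantified statement is never false: for every completion of the $u$'s, the existential player sets $x_{i,j}=u_i\oplus u_j$ and wins. It is also never a tautology, because $x_{i,j}$ is existentially quantified after the $u$'s and a fixed value of $x_{i,j}$ falsifies the matrix. More formally, a tautological leaf requires every leaf of the underlying quantified tree below it to be $1$, and the branch $x_{1,2}\ne u_1\oplus u_2$ gives a $0$. Hence no node in the universal block is a leaf. By the same argument, no node inside the $x$-block is a leaf before the last level, since the remaining unassigned $x$'s can still be set wrongly.

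The second step rules out merging, and this is the main point. A DAG node $v$ reached by two different paths with universal assignments $\rho\ne\rho'$ at the same level has a single sub-DAG below it, so it has the same set of accepted continuations. This implies that for all completions $\gamma$ of the remaining variables, $\phi|_{\rho\gamma}=\phi|_{\rho'\gamma}$. Suppose $\rho$ and $\rho'$ differ on $u_i$ and $|\rho|=k$. If some $j$ exists with $j\le k$ and $\rho(u_j)=\rho'(u_j)$, then the required value of $x_{i,j}$ differs between the two restrictions. Otherwise $\rho'$ is the bitwise complement of $\rho$ on all $k$ positions. If $k<n$, we pick $j>k$ and take the completion $u_j=0$; then $u_i\oplus u_j$ differs between the two, so some continuation is accepted under one and rejected under the other. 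If $k=n$, the two full assignments still differ unless $n=1$, because any pair of positions sharing a value under $\rho$ also shares it under $\rho'$, which handles $n\ge 2$ once we consider $x_{i,j}$ with $i\ne j$ both flipped.

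The complement case is the real obstacle. For $\rho'=\overline{\rho}$ we have $u_i\oplus u_j$ identical for all pairs, so the restricted matrices coincide and those two nodes may legitimately be merged. We therefore refine the count: nodes at the end of the universal block correspond to equivalence classes of $\beta$ under complementation, giving at least $2^{n-1}$ distinct nodes. At intermediate levels $k<n$ no merging is possible, by the completion argument above. This yields size at least $2^{n-1}$, which is exponential in $n$. We will also note that multi-edges (both children equal) are likewise excluded within the block, since setting $u_k=0$ versus $u_k=1$ differs on $x_{k,j}$ for $j<k$, or on a later $j$ via a completion.
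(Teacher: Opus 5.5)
Your argument is correct and is essentially the paper's: no FQBF or TAUT truncation is possible inside the universal block, and no two distinct prefix assignments can share a node, because they force different values of some $x_{i,j}$. Your observation that $\beta$ and $\overline{\beta}$ give identical residual matrices once all of $u_1,\dots,u_n$ are set is more careful than the paper's ``extend to more than one assignment'' remark; it only affects the level below the $2^n-1$ nodes labelled $u_1,\dots,u_n$, which all stay distinct.

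Two slips to clean up. First, delete the sentence claiming that complementary full assignments still differ at $k=n$, since it contradicts your next paragraph. Second, 0F leaves \emph{do} occur in the $x$-block (right after a wrongly set $x_{i,j}$), so your step from ``same accepted continuations'' to $\phi|_{\rho\gamma}=\phi|_{\rho'\gamma}$ should rest on a different fact: such leaves appear only where every completion falsifies the matrix. Equivalently, use that each residual QBF has a unique winning strategy, as the paper does.
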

\begin{proof}
The formulas of \textsf{QXOR-PAIRS} have a unique winning strategy, hence no internal node of the full quantified tree of $\mathcal{Q}.\oplus_n$ can be a Tautology. 
There are no possible FQBFs or equivalent nodes in the top half of the quantified tree corresponding to universal variables $u_1,...,u_n$ as: 1) every possible assignment to these variables has a unique assignment to the existential variables to make the formula True. So none of these internal nodes can ever be a FQBF. 2) consider two root-to-node paths where one $u_k$ variable is assigned differently. In one of the restricted QBFs, there are clauses $(x_{i,k}\leftrightarrow \overline{u_i})$ and the other has clauses $(x_{i,k}\leftrightarrow u_i)$ for $i\in [n]\setminus k$. Both have their own winning strategies and both are TQBFs, however these strategies as one can see are not equivalent. The argument can be extended to the paths differing by more than one assignment, in which case, there will simply be more such clauses to witness the difference.

Hence, any \dxyz of $\mathcal{Q}.\oplus_n$ has to include all $2^n$ nodes in the top half of the DAG using labels $u_1,..,u_n$. 
\end{proof}

Observe that, in fact, any QBF whose matrix requires remembering the individual truth values of a linear number of variables provides a lower bound for the \dxyz structure. This includes the \textsf{QXOR-PAIRS} formulas from~\cite{ChedeCKS26} as discussed in Theorem~\ref{thm:tqd_lowr_bound} and also the Indexed Affine formulas from the same paper.

\section{\qmice versus \cdxyz}\label{qmice_simulation_sec}
In this section, we show that \qmice~\cite{ChedeCKS26} with an equivalence rule can p-simulate \cdxyz proof system. Since the \textsf{QXOR-PAIRS} from~\cite{ChedeCKS26} requires exponential size proofs in \cdxyz (Theorem~\ref{thm:tqd_lowr_bound}) and it is known to be easy for \qmice~\cite{ChedeCKS26}, we conclude that \qmice is exponentially stronger than \cdxyz. Before describing the \qmice p-simulation result, we first discuss the rules of \qmice relevant to this paper.

Proofs in \qmice are built with tuples of the format $(Q.F|_\rho,~ \alpha,~ c)$ called claims where $\rho,\alpha$ are partial assignments to universal and existential variables respectively. A correct claim certifies that the QBF $Q.F|_{\rho\cup\alpha}$ has $c$ number of winning strategies for existential variables not assigned in $\alpha$. In general $\alpha$ can be strategies to existential variables, however in this paper, we only need $\alpha$ to be a partial assignment to existential variables. The proof begins with an Axiom claim $(Q.F|_\rho,~ \alpha,~ c)$ where $F|_{\rho\cup\alpha}$ is a Tautology and the value of $c$ in this case is computed the same way as in this paper however they store the exact count instead of just the exponent. 

Restricted versions of \qmice inference rules that we use in this paper are: 
\begin{itemize}
    \item Join rule: {\Large$\frac{(Q.F|_{\rho\cup y=0},~ \alpha,~ c_1)~~~(Q.F|_{\rho\cup y=1},~ \alpha,~ c_2)}{(Q.F|_\rho,~ \alpha,~ c_1\cdot c_2)}$} can be used when $\rho\cup\alpha$ assign all and only variables to the left of $y$ in the prefix $Q$.
    \item Composition-c rule: {\Large$\frac{(Q.F|_{\rho},~ \alpha \cup y=0,~ c_1)~~~(Q.F|_{\rho},~ \alpha \cup y=1,~ c_2)}{(Q.F|_\rho,~ \alpha,~ c_1+ c_2)}$} can be used when $\rho\cup\alpha$ assign all and only variables to the left of $y$ in the prefix $Q$.
    \item Absence of models rule: {\Large$\frac{}{(Q.F|_\rho,~ \alpha,~ 0)}$} can be used when there is explicit proof that $Q.F|_{\rho\cup\alpha}$ is a False QBF.
\end{itemize}

Since \qmice was introduced as a modular system where other sound rules can be added when required. We introduce the following equivalence rule: {\Large$\frac{(Q.F|_{\rho_1},~ \alpha_1,~ c)}{(Q.F|_{\rho_2},~ {\alpha_2},~ c)}$} whenever there is explicit proof that the QBFs are Skolem equivalent i.e. $Q.F|_{\rho_1\cup\alpha_1} \Leftrightarrow_{Sk} Q.F|_{\rho_2\cup\alpha_2}$ as defined in~\cite{PfeifferGS25}. Addition of similar rules (semantic reform rule~\cite[Figure 4]{BeyersdorffHK26}) for the propositional MICE system has been studied and shown to be exponentially stronger than the standard MICE system.

\begin{theorem}\label{thm:qmice_p_simulate_cert_tqdag}
    \qmice along with an equivalence rule p-simulates \cdxyz proof system.
\end{theorem}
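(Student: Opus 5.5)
Given a \cdxyz proof $G=(Q,T)$ for a QBF $Q.F$ with claimed count $k$, we construct a \qmice derivation with one claim per node of the DAG $T$. The derivation proceeds bottom-up in a reverse topological order of $T$. For each node $v$ we fix one canonical root-to-$v$ path, giving a universal assignment $\rho_v$ and an existential assignment $\alpha_v$. We derive the claim $(Q.F|_{\rho_v},~\alpha_v,~c_v)$, where $c_v$ is the \#\dxyz value computed at $v$. The final claim at the root has $\rho=\alpha=\emptyset$ and count $k$. The total number of claims is linear in the number of nodes of $T$. The size of the derivation is therefore the length of $G$ plus the sizes of the embedded proofs, which gives the p-simulation.

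Leaves are handled first.
\begin{itemize}
\item A $0/1$-leaf below $x_n$ fixes a full assignment. For a $1$-leaf, $F|_{\rho\cup\alpha}$ is satisfied, so it is an Axiom with $c=1$. For a $0$-leaf, falsity is verifiable directly by evaluation, so Absence of models applies.
\item A `$0$F' leaf carries an FQBF $\mathcal{P}$-proof, which serves as the explicit proof required by the Absence of models rule.
\item A tautological leaf with exponent $c$ is an Axiom claim. Its count is $2^c$, written in binary with $c+1$ bits. Since $c$ is polynomially bounded in the number of variables, this is polynomial size. The tautology proof certifies that $F|_{\rho\cup\alpha}$ is a Tautology.
\end{itemize}

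Next, consider an internal node $v$ labelled $y=x_i$ with children $w_0,w_1$ along edges $y=0,1$. The $y=b$ edge from $v$ yields the path $\rho_v\cup\alpha_v\cup\{y=b\}$. This path generally differs from $w_b$'s canonical path, since $w_b$ may be a DAG node or the shared target of a multi-edge.
\begin{itemize}
\item If the two paths differ, we first apply the equivalence rule to the claim of $w_b$. This moves it to the assignment $\rho_v\cup\alpha_v\cup\{y=b\}$, with the Skolem-equivalence justified by the proofs attached in $G$ (Definition~\ref{def:ctqdag}, and the multi-edge case of Definition~\ref{def:cert_xyz}).
\item If $y$ is universal, we then combine the two claims with the Join rule. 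If $y$ is existential, we use Composition-c.
\item The side condition that $\rho\cup\alpha$ assign exactly the variables left of $y$ holds because nodes are labelled level by level in the order of $Q$.
\end{itemize}
The resulting counts match the \#\dxyz recursion: a product at $\forall$-nodes (with a multi-edge child counted twice) and a sum at $\exists$-nodes. By induction, each claim is therefore correct with value $c_v$.

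The main obstacle is matching the equivalence certificates in $G$ to the equivalence rule. $G$ certifies equivalences among the $j$ incoming paths of a DAG node using $O(j-1)$ pairwise encodings, possibly arranged as a chain. The rule, by contrast, needs a direct proof between the canonical path and each specific incoming path. We would handle this in one of two ways:
\begin{itemize}
\item apply the equivalence rule repeatedly along the chain, each application being one link with its own proof, so the total cost stays at $O(j)$ rule applications per DAG node;
\item or choose canonical paths consistently, using the remark after Definition~\ref{def:ctqdag} that equivalences established at $v$ allow any one root-to-$v$ path to be used below it.
\end{itemize}
A further subtlety is checking that the Skolem-equivalence of restricted QBFs in the sense of~\cite{PfeifferGS25} matches the rule's requirement. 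In particular, the universal-restricted variables are flipped to $\exists$ with unit clauses (Definition~\ref{restriction_qbf}), so claims with different $\rho$ are comparable as QBFs over the same prefix.
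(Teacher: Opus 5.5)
Your construction is the same as the paper's. You derive one claim per node bottom-up: Axiom at TAUT and $1$-leaves, Absence of models at $0$F and $0$-leaves, Join at $\forall$-nodes, Composition-c at $\exists$-nodes. You then use the equivalence rule to move a node's claim onto the path of each further incoming edge. Your handling of chained equivalence certificates is, if anything, more careful than the paper's.

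\textbf{The gap is in the size bound.} Your assertion that the exponent $c$ of a tautological leaf is polynomially bounded in the number of variables is false. The recursion of Definition~\ref{xyz_def} doubles $c$ at every universal variable, so $c$ can be $2^{\Theta(n)}$. For instance, in the linear-size \cxyz for \textsf{Alt$_n$} (Theorem~\ref{thm:upper_ctqt}), the tautological leaf reached by $e_1=u_1=0$ has $c=2^{n-1}-1$. Writing its count $2^c$ in binary costs $2^{n-1}$ bits, whereas $G$ stores only the $O(n)$-bit exponent. The same blow-up affects the counts $c_v$ carried by your Join and Composition-c lines. Your estimate ``length of $G$ plus embedded proofs'' ignores their bit-length entirely. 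Restricting to instances with singly exponential total count does not rescue the argument either, since such a leaf can sit below a $\forall$-node whose other child is a $0$F leaf.

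\textbf{The missing ingredient} is that \qmice lines carry counts as arithmetic circuits rather than binary numbers; the paper relies on this in the Remark following its proof. With that representation:
\begin{itemize}
\item The Axiom count $2^c$ for a leaf at depth $i$ is given by an $O(n)$-gate circuit mirroring the recursion. Start from $1$ and, for $j=n$ down to $i+1$, apply $v\mapsto v+v$ if $Q_j=\exists$ and $v\mapsto v\cdot v$ if $Q_j=\forall$.
\item Each Join or Composition-c step adds a single $\times$ or $+$ gate on top of its premises' circuits.
\item The equivalence rule reuses its premise's circuit unchanged.
\end{itemize}
The derived \qmice proof is then linear in the size of $G$ plus its embedded proofs. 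This includes the doubly exponential cases the simulation is meant to cover.
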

\begin{proof}
Consider a \cdxyz structure $G=(Q,T)$ with all corresponding proofs of equivalence, FQBFs and Tautologies representing a QBF $Q.F$. Below we build the \qmice proof of $Q.F$ in a bottom-up fashion from $G$:
\begin{itemize}
    \item For a leaf that is either a TAUT node or a $1$-node, consider one of its root-to-leaf path and use the corresponding assignment to derive the Axiom claim with the count as $2^{<node-label>}$ for a TAUT node and $1$ for a $1$-node. 
    \item Similarly, for a leaf node that is either a FQBF or a $0$-node, use the corresponding assignment of one of its root-to-leaf path to derive the absence of models claim.
    \item For an internal node labelled with an $\exists$-variable, use the composition rule to derive it's corresponding claim once the child claims are derived.
    \item Similarly, for an internal node labelled with an $\forall$-variable, use the composition rule to derive it's corresponding claim once the child claims are derived.
    \item After deriving the corresponding claim of a node, if it has more than one incoming edges, pick the parent whose root-to-node path doesn't match the generated claim. Use the equivalence proofs to derive a copy of the claim for one of the root-to-node paths involving the said parent.
\end{itemize}
Finally at the root, you should have methodically dropped all restrictions and derived the final claim $(Q.F,~ \emptyset,~ c)$ where $c$ is the exact number of winning strategies for the existential player in $Q.F$. The size of the generated \qmice proof is linear in the number of nodes and edges of $G$.
\end{proof}

%See Appendix~\ref{qmice_sim_sec_ap} for the proof of the simulation. 
For an example of this simulation, see Table~\ref{qmice_example}, which shows a generated \qmice proof from the \dxyz of Figure~\ref{running_dag_fig} for the QBF $\Phi$ of Example~\ref{running_eg}.

\begin{table}[h]
\caption{Example of \qmice proof generated from a \cdxyz structure}
\centering
\addtolength{\tabcolsep}{6pt}
\begin{adjustbox}{width=0.85\textwidth}
\small
\begin{tabular}{c c c c}
\hline
Claim no. & Rule & Claim & \dxyz Edge no.\\ 
\hline
 $C_1$ & Axiom & $(\Phi|_{\{u_1=0, u_2=0\}},~\{e_1=1, e_2=0\},~ 1)$ & ${(9)}$\\ 
 $C_2$ & Absence of models & $(\Phi|_{\{u_1=0, u_2=0\}},~\{e_1=1, e_2=1\},~ 0)$ & ${(10)}$\\
 $C_3$ & Composition ($C_1, C_2$) & $(\Phi|_{\{u_1=0, u_2=0\}},~\{e_1=1\},~ 1)$ & ${(8)}$\\
 $C_4$ & $\simeq C_3$ & $(\Phi|_{\{u_1=0, u_2=1\}},~\{e_1=1\},~ 1)$ & ${(7)}$\\
 $C_5$ & Join ($C_3, C_4$) & $(\Phi|_{\{u_1=0\}},~\{e_1=1\},~ 1)$ & ${(4)}$\\
 $C_6$ & Absence of models & $(\Phi|_{\{u_1=0\}},~\{e_1=0\},~ 0)$ & ${(3)}$\\
 $C_7$ & Composition ($C_5, C_6$) & $(\Phi|_{\{u_1=0\}},~\emptyset,~ 1)$ & ${(1)}$\\
 $C_8$ & Axiom & $(\Phi|_{\{u_1=1\}},~\{e_1=1\},~ 2^2)$ & ${(6)}$\\
 $C_9$ & $\simeq C_5$ & $(\Phi|_{\{u_1=1\}},~\{e_1=0\},~ 1)$ & ${(5)}$\\
 $C_{10}$ & Composition ($C_8, C_9$) & $(\Phi|_{\{u_1=1\}},~\emptyset,~ 5)$ & ${(2)}$\\
 $C_{11}$ & Join ($C_7, C_{10}$) & $(\Phi,~\emptyset,~ 5)$ & -\\
 \hline
\end{tabular}
\end{adjustbox}
\label{qmice_example}
\end{table}

\begin{remark*}
As noted in~\cite{ChedeCKS26}, \qmice is capable of handling doubly exponential numbers using the arithmetic circuit in its lines. This helps \qmice to efficiently simulate \cdxyz proofs for QBFs with doubly exponential model counts.
\end{remark*}

As a Corollary to Theorem~\ref{thm:tqd_lowr_bound} and Theorem~\ref{thm:qmice_p_simulate_cert_tqdag}, we have the following:

\begin{corollary}
\qmice along with an equivalence rule is exponentially stronger than \cdxyz proof system.
\end{corollary}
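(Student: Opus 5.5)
The corollary has two halves, and I would combine them. The first half is a simulation: Theorem~\ref{thm:qmice_p_simulate_cert_tqdag} shows that \qmice with the equivalence rule p-simulates \cdxyz. The second half is a separation: we need a family of formulas that is easy for \qmice but exponentially hard for \cdxyz. Together these give that \qmice with the equivalence rule is at least as strong as \cdxyz and exponentially stronger on some family.

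For the separating family I take \textsf{QXOR-PAIRS} $\mathcal{Q}.\oplus_n$ from Definition~\ref{qxor_dag_lb_formula}.

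\begin{enumerate}
\item \emph{Upper bound in \qmice.} By~\cite[Section 4.1]{ChedeCKS26}, \qmice has polynomial-size proofs of the model count of $\mathcal{Q}.\oplus_n$, which is $1$. Adding the equivalence rule only enlarges the set of available inferences, so these proofs stay valid in the extended system. Proof size therefore cannot increase.

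\item \emph{Lower bound in \cdxyz.} Every \cdxyz proof contains an underlying \dxyz structure for the input QBF. By definition, the size of the \cdxyz proof is at least the size of that structure. Theorem~\ref{thm:tqd_lowr_bound} says any \dxyz for $\mathcal{Q}.\oplus_n$ has at least $2^n$ nodes, since it must contain all the $u$-level nodes. So every \cdxyz proof has size $2^{\Omega(n)}$.

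\item \emph{Size parameter.} The formula $\mathcal{Q}.\oplus_n$ has $O(n^2)$ variables and clauses, so the input size is polynomial in $n$. A lower bound of $2^n$ is therefore exponential in the input size.
\end{enumerate}

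The only point needing care is that the lower bound of Theorem~\ref{thm:tqd_lowr_bound} applies regardless of the choice of the underlying QBF proof system $\mathcal{P}$ used for the certificates. It does, because the bound depends only on the shape of the \dxyz and not on the attached certificates. Adding certificates can only increase size.
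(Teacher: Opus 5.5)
Your proposal is correct and follows the paper's argument: the paper obtains the corollary by combining the p-simulation of Theorem~\ref{thm:qmice_p_simulate_cert_tqdag} with the \textsf{QXOR-PAIRS} separation. That separation uses the \dxyz lower bound of Theorem~\ref{thm:tqd_lowr_bound} and the known short \qmice proofs from~\cite[Section 4.1]{ChedeCKS26}. The paper leaves three points implicit that you spell out, and they are the right checks: adding the equivalence rule cannot lengthen proofs, the lower bound is semantic and so holds for every choice of $\mathcal{P}$, and the input size is polynomial in $n$.
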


\section{\xyz Compiler }\label{sec:tqt_compiler}
Based on the tree representation defined in Section~\ref{sec:xyz}, we implemented a C++ tool called \qbfsolver.
In Algorithm~\ref{alg:tqtcompiler}, we give a pseudo-code describing the implementation of \qbfsolver.
The algorithm proceeds in two broad parts: a recursive explorer to compile a \xyz structure for the given QBF, and a tree evaluation algorithm to compute the strategy count using the \xyz structure.
In what follows, we describe these two parts in detail.

First, we describe the recursive explorer presented in Algorithm~\ref{subr:recursive_explorer}, which explores the given formula in a breadth-first fashion.
The explorer starts with a root node, which corresponds to the input formula, and starts exploring according to the \xyz structure.
To make the recursion breadth-first, the explorer reads what to explore from a FIFO queue, and subsequent steps generated while exploring are added to the queue.
At each node, the explorer proceeds in five major steps listed below.

\begin{itemize}
    \item \textbf{Preprocessing}: The explorer receives a QBF \(Q. \phi\) along with a prefix assignment \(\alpha\).
    The explorer creates a new node with \(Q. \phi\) as the associated QBF and assigns it a unique id in line 1.
    In line 2, the explorer attempts to simplify the QBF by removing satisfied clauses and unused existential variables to obtain \(Q'. \phi'\) (see Algorithm~\ref{subr:simplify} in Appendix~\ref{solver_appx}).
    Note that removing satisfied clauses does not change the strategy count, but the strategy count may change when removing unused existential variables.
    However, these counts are related by the following expression, as described in~\cite{CapelliLPS24}.

    \begin{lemma}[{\cite[Lemma 2]{CapelliLPS24}}]
    \label{lem:var_elim}
    Consider a QBF \(Q. \phi\) and another \(Q'. \phi\) such that \(Q'\) contains all \(\forall\) variables from \(Q\) and a subset of the \(\exists\) variables in \(Q\).
    Then, their strategy counts are related as
    \[\#(Q. \phi) = 2^{\sum_{x \in Q \setminus Q'} 2^{\abs{\mathcal{L}_Q (x)}}} \#(Q'.\phi)\]
    \end{lemma}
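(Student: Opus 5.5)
The plan is to compare model-trees (winning strategies) of $Q.\phi$ and $Q'.\phi$ directly, working in the quantified-tree view of Section~\ref{prelims_sec}. The key observation is that every existential variable $x \in Q\setminus Q'$ does not occur in $\phi$. Two consequences follow. First, $Q'.\phi$ is obtained from $Q.\phi$ by deleting existential levels whose variables are irrelevant to the matrix. Second, the choice made at any node labelled by such an $x$ never affects the leaf values reached.

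First step: count the nodes labelled by a removed variable $x$ that lie inside a single model-tree of $Q.\phi$. A model-tree branches only at universal nodes. So the number of copies of $x$-nodes in one model-tree equals the number of assignments to the universals preceding $x$, namely $2^{\abs{\mathcal{L}_Q(x)}}$. (The paper's notion of strategy counting uses the prefix, so inner-block orderings do not matter: we may order within blocks so that each $x$ sits after all universals in $\mathcal{L}_Q(x)$ and before the later ones.)

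Second step: build a map $\Psi$ from model-trees of $Q.\phi$ to model-trees of $Q'.\phi$. At every node labelled by a removed variable, $\Psi$ contracts the single chosen outgoing edge. The result is a subtree of the quantified tree of $Q'.\phi$ with the correct branching. Its leaves carry the same $\phi$-values, since $x\notin\vars(\phi)$, so every leaf is still $1$. Conversely, given a model-tree of $Q'.\phi$, one recovers exactly the model-trees in the fiber of $\Psi$ by inserting, at each of the $2^{\abs{\mathcal{L}_Q(x)}}$ positions for each removed $x$, an arbitrary choice of $0$ or $1$. Every such choice again gives a model-tree of $Q.\phi$, because leaf values are unchanged. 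The positions are indexed independently of the choices made, since the universal assignments determine the positions. Hence every fiber has size exactly $\prod_{x} 2^{2^{\abs{\mathcal{L}_Q(x)}}} = 2^{\sum_x 2^{\abs{\mathcal{L}_Q(x)}}}$, and summing over the fibers gives the identity.

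An alternative, equivalent route is induction on the number of removed variables, deleting one variable $x$ at a time. In the bottom-up evaluation, the $x$-node's two children are isomorphic subtrees because $x$ is irrelevant, so that node's value is twice its child's value. Pushing this factor $2$ up through the universal multiplications above $x$ raises it to the power $2^{\abs{\mathcal{L}_Q(x)}}$, while the existential additions above are linear and preserve the factor. The main obstacle is making this push-up exact: an existential sum $a+b$ in which only $a$ carries the factor would break it. This is avoided because every branch at a given depth contains an $x$-node at the same level, so each sibling carries the same factor. The bijective argument sidesteps the issue, so I would present that version and mention the inductive computation only as a check.
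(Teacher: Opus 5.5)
Your proof is correct. The paper gives no argument of its own here: the lemma is imported from \cite[Lemma 2]{CapelliLPS24} and used only to justify the multiplier exponent $e$ in \emph{Simplify}, so there is no in-paper route to match.

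\textbf{Hypotheses.} Your opening observation, that the deleted existentials do not occur in $\phi$, is worth stating because the lemma leaves it implicit (it is what makes $Q'.\phi$ a closed QBF). You also use tacitly that $Q'$ keeps the surviving variables in the relative order of $Q$. That is what gives $\mathcal{L}_{Q'}(y)=\mathcal{L}_Q(y)$ for every retained existential $y$, and it deserves a sentence. Your parenthetical about choosing the within-block order is vacuous: every block-respecting order already places $x$ after exactly the universals in $\mathcal{L}_Q(x)$. All you need is to give $Q'$ the order induced from the one fixed for $Q$.

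\textbf{Comparison with the Skolem-function route.} The shortest proof goes through Skolem functions. A winning strategy for $Q.\phi$ is a tuple $(f_y)_y$ with $f_y\colon\{0,1\}^{\mathcal{L}_Q(y)}\to\{0,1\}$. The winning condition never mentions $f_x$ for a deleted $x$. By order preservation, viewed as a condition on the retained functions, it is literally the winning condition for $Q'.\phi$. Hence each deleted $x$ contributes an unconstrained function, i.e.\ a factor $2^{2^{\abs{\mathcal{L}_Q(x)}}}$.

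Your $2^{\abs{\mathcal{L}_Q(x)}}$ positions of $x$-nodes in a model-tree are exactly the inputs of $f_x$. So your fibre argument is the same factorisation written in model-tree language. The tree version needs bookkeeping that you only sketch:
\begin{itemize}
\item the contracted tree embeds injectively into the quantified tree of $Q'.\phi$, because model-trees branch only at universal nodes;
\item a first-difference argument shows that distinct choice tuples yield distinct model-trees.
\end{itemize}
In exchange, it stays inside the model-tree semantics that the paper actually uses for counting. Your inductive variant also makes explicit how the factor $2$ at the $x$-level becomes $2^{2^k}$ after $k$ universal levels of the bottom-up evaluation.
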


    The explorer stores the exponent of \(2\) used for the multiplier in the expression, i.e. \(\sum_{x \in Q \setminus Q'} 2^{\abs{\mathcal{L}_Q (x)}}\), as an attribute of the node, denoted by \(e\).
    Storing the exponent, rather than the multiplier itself, preserves space required for the output.
    Moreover, we can keep \(e\) as a fixed-width integer, say \(64\) bits.
    This makes it possible to output the \xyz structure for larger QBF instances, such as, when \(\abs{\mathcal{L}_Q (x)}\) is large.
    For example, if \(\abs{\mathcal{L}_Q}\) is \(60\), \(e\) can accommodate \(2^{60}\), but \(2^{2^{60}}\) would be too large to store.

    \item \textbf{Caching}: 
    The explorer employs a caching mechanism to avoid exploring nodes it has already seen in lines 3-7, inspired from~\cite{CapelliLPS24}.
    In the cache, the explorer stores a node id associated to a hash key uniquely assigned to the QBF \(Q'. \phi'\).
    While exploring a node, say with node id \(N\), the explorer checks whether the key associated to the node is already present in the cache.
    If the key is already present in the cache, with node id \(M\) stored against it, the explorer marks \(N\) equivalent to \(M\) and marks \(N\) as SEEN.
    For this purpose, the explorer stores an attribute \(eq\) in node \(N\) and outputs \(N\) to terminate the current path.
If the cache does not contain the key, the explorer stores \(N\) against the node's key in line 8 and starts exploring.

    \item \textbf{Leaf Checks}:
    In line 9-10, if \(Q'. \phi'\) is an FQBF, then the explorer marks \(N\) as an FQBF\_LEAF and outputs \(N\) to terminate the path.
    To check whether \(Q'. \phi'\) is an FQBF, the explorer employs some syntactic checks and calls an FQBF oracle if needed.
    If this check fails, or the oracle call doesn't return a valid output, the explorer continues.

    In line 11-14, if \(\phi'\) is a tautology, then the explorer marks \(N\) as a TAUT\_LEAF.
    Then, it computes the count of strategies at this node as discussed in Section~\ref{prelims_sec}.
    Note that checking whether a CNF is a tautology is straightforward, as a CNF is a tautology if and only if it does not contain any clauses.
    Hence, we avoid any oracle calls in this step.
    The explorer outputs \(N\) and terminates the path.

    \item \textbf{Node Expansion}:
    If neither checks from the above pass, the explorer picks a variable \(a\) from the leftmost quantifier block in \(Q'\) in line 15.
    We denote the restricted QBF formulas after setting \(a\) as \(F_1 = Q'. \phi' \mid_{a=0}\) and \(F_2 = Q' \phi' \mid_{a=1}\).
    The explorer marks \(N\) as expanded in line 16.
    To continue exploration, the explorer adds a new node \(N_1\) with \((Q'.\phi' \mid_{a=0}, \alpha \cup \{a = 0\})\) as the input to continue the recursion in line 17.
    
    \item \textbf{Equivalence Check}:
    Now, the explorer checks whether \(F_1\) and \(F_2\) are equivalent as QBFs using an equivalence oracle call in line 18.
    If the formulas are equivalent, the explorer copies \(N_1\) as \(N_2\) in lines 19-20.
    Otherwise, the explorer adds a new node \(N_2\) with \((Q'.\phi' \mid_{a=1}, \alpha \cup \{a = 1\})\) as the input to continue the recursion in lines 21-22.
    Finally, in line 23, the explorer outputs an edge from \(N\) to \(N_1\) and \(N_2\) each.
\end{itemize}

The explorer outputs a file in a flat format to represent the tree.
The file contains two types of blocks of lines: node blocks and edge blocks.
A node block consists of six consecutive lines that contain the following information in the order listed below.

\begin{enumerate} 
\item Node id.
\item Depth in the recursion.
\item A value of \(0\) for internal nodes, and the strategy count at leaves.
\item The multiplier exponent \(e\) obtained from eliminating unused existential variables.
\item Id of the equivalent node \(eq\) in case of cache hits, and zero otherwise.
\item Label of the node, which can be LEAF\_FQBF, LEAF\_TAUT, SEEN or EXPANDED.
\end{enumerate}

Each edge block consists of one line which stores the parent node id, child node id, expanded variable ID and the value assigned to the variable.
This completes the description of the recursive explorer, the first part of \qbfsolver.
To illustrate the output, we describe the file output from the explorer for Example~\ref{running_eg}. Note that the \qbfsolver output structure is somewhere in between a \xyz and a \dxyz structure (see Figure~\ref{fig:solver_tree_eg}). Precisely, the \qbfsolver, implements syntactic checks (using hashing) for all nodes of the same level which is not used in \xyz. On the other hand, \dxyz uses the more powerful Skolem-equivalence checks for all nodes of the same level.

\begin{example}
\label{ex:solveroutput}
The explorer outputs a file with \(11\) nodes when run with Example~\ref{running_eg} as the input.
Node \(1\) expands on the variable \(u_1\) to get nodes \(2\) and \(3\) as the children with \(u_1 = 0\) and \(u_1 = 1\) respectively.
This node is output to the file as shown in Table~\ref{tab:outputnodes}.
The edges from \(1\) to \(2\) and \(3\) are output as follows.

{\sffamily
\begin{tabular}{lllll}
EDGE & 1 & 2 & 1 & 0 \\
EDGE & 1 & 3 & 1 & 1
\end{tabular}
}

Then, node \(2\) expands on \(e_1\) to get nodes \(4\) and \(5\) with \(e_1 = 0\) and \(e_1 = 1\) respectively.
Similarly, node \(3\) expands on \(e_1\) to get nodes \(6\) and \(7\) with \(e_1 = 0\) and \(e_1 = 1\) respectively.
Node \(4\) is found to be an FQBF, hence, marked as an FQBF\_LEAF (shown in Table~\ref{tab:outputnodes}).

Node \(5\) expands on \(u_2\), but finds that both \(u_2 = 0\) and \(u_2 = 1\) lead to the same restricted QBF.
Hence, the explorer creates a node \(8\) with \(u_2 = 0\), and outputs parallel edges from \(5\) to \(8\).
Now, at node \(6\), the restricted QBF after removing satisfied clauses becomes \(\forall u_2 \exists u_2 . (u_2 \vee \overline{e_2}) \wedge (\overline{u_2} \vee \overline{e_2})\).
Note that at node \(5\), it has the same QBF after removing satisfied clauses, hence node \(6\) is marked as SEEN.
The output for node \(6\) is shown in Table~\ref{tab:outputnodes}. 

At Node \(7\), the restricted formula after removing satisfied clauses is \(\forall u_2 \exists e_2 . \) with no clauses left.
Hence, it is a tautology and marked as LEAF\_TAUT.
Note that the propositional formula is empty, hence the explorer removes the existential variable \(e_2\) which depends on \(1\) universal variable.
Therefore, the exponent as per Lemma~\ref{lem:var_elim} after removing \(e_2\) is \(2^1 = 2\), leading to an output as shown in Table~\ref{tab:outputnodes}.

Then, node \(8\) expands on \(e_2\) to obtain nodes \(9\) and \(10\) with \(e_2 = 0\) and \(e_2 = 1\), which evaluate to \(1\) and \(0\) respectively.
The explorer has finished exploring.
Figure~\ref{fig:solver_tree_eg} shows the full tree as output by the explorer, wherein a double lined edge represent a cache hit.
\end{example}

\begin{table}
{\sffamily \small
\begin{tabular}{|ll | ll|}
\hline
NODE & 1 & NODE & 4\\
DEPTH & 0 & DEPTH & 2\\
VALUE & 0 & VALUE & 0 \\
MULTIPLIER\_EXPONENT & 0 & MULTIPLIER\_EXPONENT & 0 \\
EQUIVALENT\_NODE & 0 & EQUIVALENT\_NODE & 0 \\
STATUS & EXPANDED & STATUS & LEAF\_FQBF \\
\hline
NODE & 6 & NODE & 7 \\
DEPTH & 2 & DEPTH & 2 \\
VALUE & 0 & VALUE & 1 \\
MULTIPLIER\_EXPONENT & 0 & MULTIPLIER\_EXPONENT & 2 \\
EQUIVALENT\_NODE & 5 & EQUIVALENT\_NODE & 0 \\
STATUS & SEEN & STATUS & LEAF\_TAUT \\
\hline
\end{tabular}
}
\caption{Output format of some \xyz nodes from Example~\ref{ex:solveroutput}}
\label{tab:outputnodes}
\end{table}

The second part is the tree evaluation algorithm, which first builds the \xyz \(T\) in memory by reading the output of the explorer.
Then it starts a depth-first evaluator subroutine with memoization, with the root of \(T\) as the input, as described in Algorithm~\ref{subr:eval_tree}.
The subroutine is described in Algorithm~\ref{subr:eval_tree_helper}, and consists of three major steps.

\begin{itemize}
\item \textbf{Memo Check}:
The evaluator starts with searching the given node \(N\) in its memo, and returns the stored value if the node is found in the memo, in lines 1-3.

\item \textbf{Equivalence Check}:
Then, the evaluator first checks if the node is marked as SEEN in \(T\), using line 4.
If \(N\) is marked as SEEN, the evaluator makes a recursive call to obtain the value for \(N.eq\), in line 5.
Then, it multiplies the value by raising \(2\) with an exponent equal to the difference of the exponent stored at \(N\) and the exponent at \(N.eq\), in line 6.
This value is then stored in the memo in line 7.

\item \textbf{Leaf Check}:
If the node is marked as LEAF\_FQBF, the evaluator stores \(0\) in the memo against \(N\), in lines 8-9.
Otherwise, if the node is marked as LEAF\_TAUT, the evaluator multiplies the value \(N.v\) stored at the node, multiplies it with \(2^{N.e}\) and stores it in the memo against \(N\), in lines 10-11.

\item \textbf{Evaluating Children}:
For internal nodes, the evaluator makes recursive calls on the children of \(N\) to obtain the children's value in lines 12-14.
Then, if \(N\) expands on a universal variable, the subroutine multiplies the children's values, in lines 15-16.
Otherwise, if \(N\) expands on a existential variables, the subroutine adds the values, in lines 17-19.
The obtained value is multiplied by \(2^{N.e}\) and stores it in the memo against \(N\), in line 20.

Finally, the value stored in the memo against \(N\) is returned from the evaluator.
\end{itemize}
The tree evaluation algorithm outputs the value obtained from the evaluator subroutine for the root of \(T\).
This completes the description of the \qbfsolver.

\noindent
\begin{minipage}{0.45\textwidth}
\begin{algorithm}[H]
\footnotesize
\begin{algorithmic}
\Require A QBF \(Q. \phi\).
\Ensure A \xyz \((Q, T)\) and \#QBF for \(Q. \phi\).
\State \((Q, T) \leftarrow \mathop{Explorer}(Q. \phi, \alpha=\emptyset)\)
\State Output \(((Q, T), \mathop{EvaluateTree}((Q, T)))\)
\end{algorithmic}
\caption{\#QBF Counter}
\label{alg:tqtcompiler}
\end{algorithm}
\end{minipage}
\hfill
\begin{minipage}{0.52\textwidth}
\begin{algorithm}[H]
\footnotesize
    \begin{algorithmic}
        \Require A \xyz structure \((Q, T)\).
        \Ensure An integer \(c\) with the value at the root of T.
        \State \(\mathsf{memo} \leftarrow \{\}\)
        \State Output \(\mathop{EvaluateTreeHelper}((Q, T), \mathop{root}(T))\)
    \end{algorithmic}
    \caption{$\mathop{EvaluateTree}$}
    \label{subr:eval_tree}
\end{algorithm}
\end{minipage}

\noindent
\begin{minipage}{0.46\textwidth}
\begin{algorithm}[H]
\footnotesize
\begin{algorithmic}[1]
\Require A QBF \(Q. \phi\), a partial assignment \(\alpha\).
\Ensure A \xyz structure \((Q, T)\).
\State \(N \leftarrow \mathsf{Node}(F=Q.\phi, e=0, v=0)\)
\State \(Q'. \phi', N.e \leftarrow \textit{Simplify}(Q. \phi)\)
\State \(k \leftarrow \mathop{ComputeHash}(Q'. \phi')\)
\If {\(\mathop{Cache}(k)\)}
\State \(N.eq \leftarrow \mathop{Cache}(k)\)
\State Output \(N\) as SEEN
\EndIf
\State \(\mathop{Cache}(k) \leftarrow N\)
\If {\(\mathop{FQBF}(Q'. \phi')\)}
\State Output \(N\) as FQBF\_LEAF
\ElsIf {\(\mathop{TAUT}(\phi)\)}
\State \(N.v \leftarrow \mathop{TautCountStrategies}(Q'. \phi')\)
\State Output \(N\) as TAUT\_LEAF
\EndIf
\State Denote the leftmost variable in \(Q'. \phi'\) as \(a\)
\State Output \(N\) as EXPANDED
\State \(N_1 \leftarrow \mathop{Explorer}(Q'. \phi' \mid_{a = 0}, \alpha \cup \{a = 0\})\)
\If {\(\mathop{Equiv}(Q'. \phi' \mid_{a = 0}, Q'. \phi' \mid_{a=1})\)}
\State \(N_2 \leftarrow N_1\)
\Else
\State \(N_2 \leftarrow \mathop{Explorer}(Q'. \phi' \mid_{a=1}, \alpha \cup \{a = 1\})\)
\EndIf
\State Output an edge from \(N\) to \(N_1\) and \(N_2\) each
\end{algorithmic}
\caption{$\mathop{Explorer}$}
\label{subr:recursive_explorer}

\end{algorithm}
\end{minipage}
\hfill
\begin{minipage}{0.51\textwidth}
\begin{algorithm}[H]
\footnotesize
    \begin{algorithmic}[1]
        \Require A \xyz \((Q, T)\) and a node \(N\) in \(T\).
        \Ensure An integer \(c\) with the value calculated at \(N\)
        \If {\(\mathsf{memo}(N)\)}
        \State Return \(\mathsf{memo}(N)\)
        \EndIf
        \If {\(N.eq\) is set}
        \State \(c \leftarrow \mathop{EvaluateTree}((Q, T), N.eq)\)
        \State \(c \leftarrow 2^{N.e - N.eq.e} \times c\)
        \State \(\mathsf{memo}(N) \leftarrow c\)
        \ElsIf {\(N\) is an FQBF leaf}
        \State \(\mathsf{memo}(N) \leftarrow 0\)
        \ElsIf {\(N\) is a TAUT leaf}
        \State \(\mathsf{memo}(N) \leftarrow 2^{N.e} \times N.v\)
        \Else
        \State \(c_1 \leftarrow \mathop{EvaluateTreeHelper}((Q, T), \mathop{left}(N))\)
        \State \(c_2 \leftarrow \mathop{EvaluateTreeHelper}((Q, T), \mathop{right}(N))\)
        \If{\(N\) is universal node}
        \State \(c \leftarrow c_1 \times c_2\)
        \Else
        \State \(c \leftarrow c_1 + c_2\)
        \EndIf
        \State \(\mathsf{memo}(N) \leftarrow 2^{N.e} \times c\)
        \EndIf
        \State Return \(\mathsf{memo}(N)\)
    \end{algorithmic}
    \caption{$\mathop{EvaluateTreeHelper}$}
    \label{subr:eval_tree_helper}
\end{algorithm}
\end{minipage}

\subsection{Experimental evaluation}\label{subsec:exp_eva}
We implement the algorithms discussed above as a C++ tool named \qbfsolver. 
We also provide two new families of QBF benchmarks. 
The experimental evaluation is performed on an Amazon Web Services (AWS) EC2 `r7a.2xlarge' instance running Ubuntu. All benchmarks were executed on a single core under identical conditions with a 900-second timeout and a 32 GB memory limit per instance. We report wall-clock runtimes throughout the experiments. An instance was considered unsolved if the solver exceeded either the prescribed time limit or the available memory. We have used two existing solvers as Oracles in the implementation of \qbfsolver. Specifically, the DepQBF Solver~\cite{LonsingE17} to check for False QBFs and the QSOLE equivalence checker~\cite{PfeifferPGS26, PfeifferGS25} to find QBF equivalences between the two children of a node. By default, \qbfsolver uses both these oracles. To be precise, it calls QSOLE in line-18 and DepQBF in line-9 of Algorithm~\ref{subr:recursive_explorer}. For the experiments reported in this paper, we disable \qbfsolver from calling QSOLE by passing an optional parameter ``- -no\_qsole''. However \qbfsolver does some other basic equivalence checks in line-18 of Algorithm~\ref{subr:recursive_explorer}. In future we hope to find benchmarks which depict the use of QSOLE in \qbfsolver.

\textbf{Benchmarks:} We evaluate our solver in comparison to d4-QBF~\cite{CapelliLPS24} on benchmarks from~\cite{CapelliLPS24} and also introduce new benchmarks to capture the uniqueness of \qbfsolver. We introduce the following two benchmarks:
\begin{itemize}
    \item Rec\_Alt\_Taut\_False formulas: These are the recursively alternating tautology and FQBF formulas from Definition~\ref{ctqt_ub_formula}. For an integer $n>1$, the corresponding QBF has $2n-1$ variables and $n$ clauses. It has $2^{2^n-n-1}$ winning strategies for existential player. 

    \item Phi\_True formulas: These are the true versions of important $\phi_n$ formulas in the QBF literature~\cite{DBLP:journals/tcs/JanotaM15} (see Definition~\ref{phi_dag_ub_formula}). For an integer $n>1$, the corresponding QBF has $4n$ variables and $4n$ clauses of width $2$. These formulas have doubly exponential many winning strategies, to be precise, $2^{2^{n+1}-2}$. 
\end{itemize}

\begin{figure}
\centering
\includegraphics[width=0.7\textwidth]{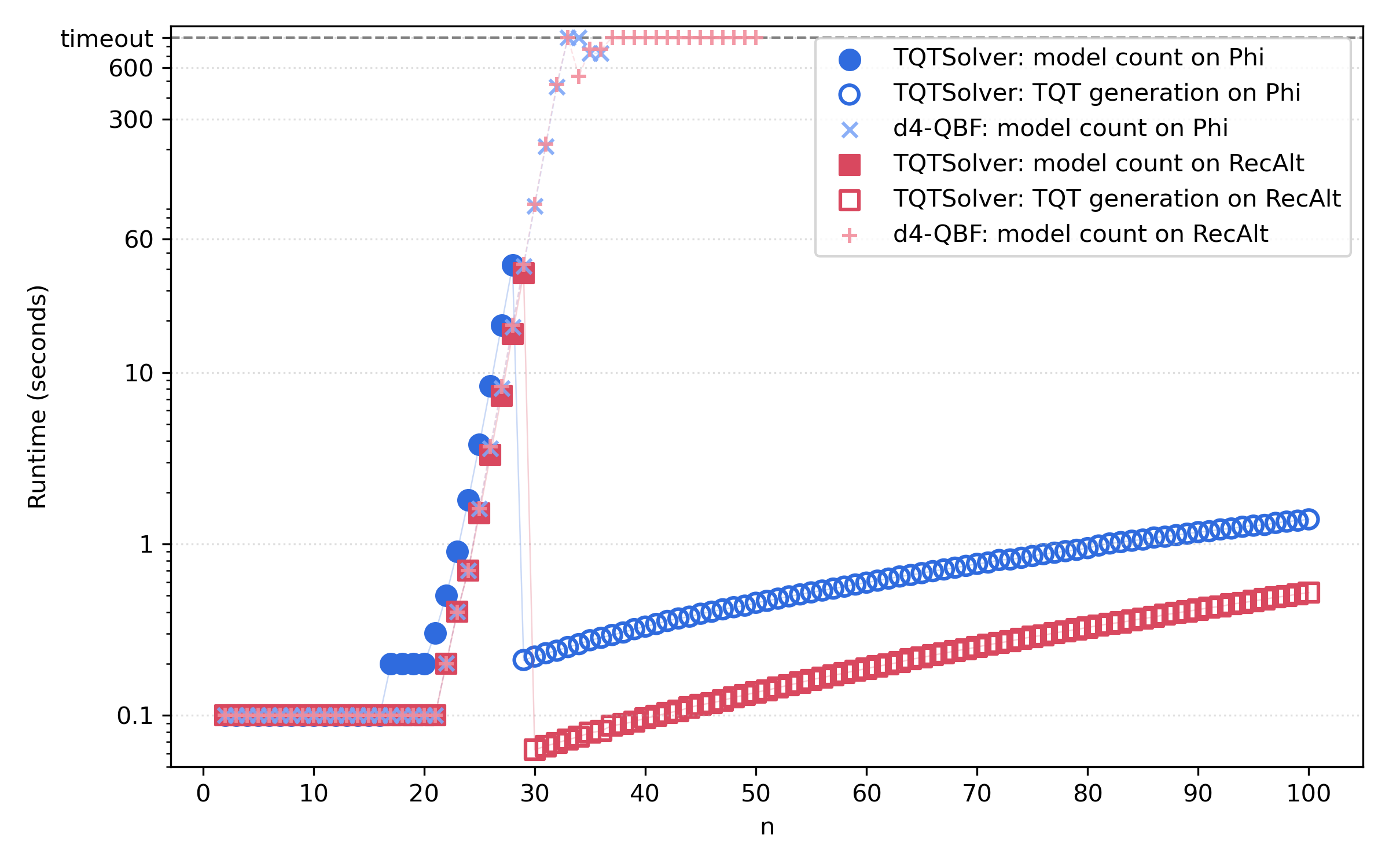}
    \caption{d4-QBF and \qbfsolver on new benchmarks}
    \label{graph2_solver}
\end{figure}

Any solver trying to output exact counts is bound to take exponential time computing these doubly exponential values, which is also the case with d4-QBF. However as shown in Figure~\ref{graph2_solver}, after n=32, d4-QBF keeps running out of memory or time but \qbfsolver can output the equivalent small-sized \xyz structure in under 5 seconds. Figure~\ref{table_solver_results} shows a comparison considering the number of solved instances for all the benchmarks families considered for evaluations.

Explanation for Figure~\ref{graph2_solver}: We observe that for the new benchmarks, both d4-QBF and \qbfsolver are computing the count correctly in under 10 seconds up to $n=28$. Note that the model count at this point is $2^{2^{30}}$. In fact d4-QBF can output the model counts up to $n=33$. After that both solvers are timed-out and unable to compute the count. 
However, we can run \qbfsolver just to output the \xyz representation without computing the number of models. Running this on the two new benchmarks it outputs the \xyz representations for up to $n=100$ in under $5$ seconds. 

\begin{figure}
    \centering
\includegraphics[width=0.55\textwidth]{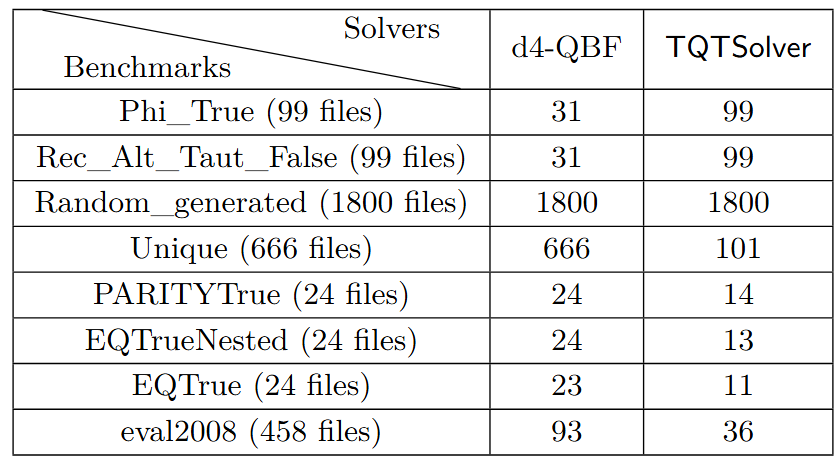}
    \caption{Number of solved instances from benchmarks}
\label{table_solver_results}
\end{figure}

Explanation for Figure~\ref{table_solver_results}: These compare the performance of d4-QBF against that of \qbfsolver on eight different benchmark families and reports the number of solved instances by each solver. 
The total number of instances in each family are mentioned in the parenthesis. As already seen in Figure~\ref{graph2_solver}, \qbfsolver can produce a \xyz for all instances in the new benchmark families while the d4-QBF can solve 31 smallest instances in these benchmarks within the time limit of 900 seconds. For the remaining benchmarks in this table, we enable the model counting in \qbfsolver as well.
For the Random generated benchmarks, both solvers can solve all the instances. In the Unique benchmarks (where the model count is always $1$), d4-QBF can solve all of the instances whereas \qbfsolver could only solve $15\%$ of them. Parity is hard in \xyz (Theorem~\ref{thm:tqt_lowr_bound}), so \qbfsolver expectedly only solves the smallest 14 instances from the PARITYTrue benchmarks with model count $1$ while d4-QBF solves all $24$ of them. However after the first 18 instances, the time taken by the d4-QBF is observed to be doubling to that of the previous instance. Hence, extending the PARITYTrue beyond $26$ instances would imply a timeout for d4-QBF. Similarly for the benchmark families based on Equality (EQTrue, EQTrueNested) the model count is doubly exponential and d4-QBF can solve $23$ or $24$ of the smallest instances but extending these families beyond $25$ would imply a timeout for d4-QBF. On the other hand, \qbfsolver on these benchmarks can solve up-to $13$ of the smallest instances. Lastly on the QBF solving benchmarks from 2008, both solvers perform poorly.

Runtime comparisons for these benchmarks on both solvers are shown in  Figure~\ref{all_benchmark_times}. Note that not all benchmark instances are plotted in Figure~\ref{all_benchmark_times}. To reduce the clutter, we randomly sample around $15\%$ of instances from the bigger benchmark families (Random, Unique) and around $20\%$ of the timeout instances on either side.

\begin{figure}[!ht]
    \centering
\includegraphics[width=0.8\linewidth]{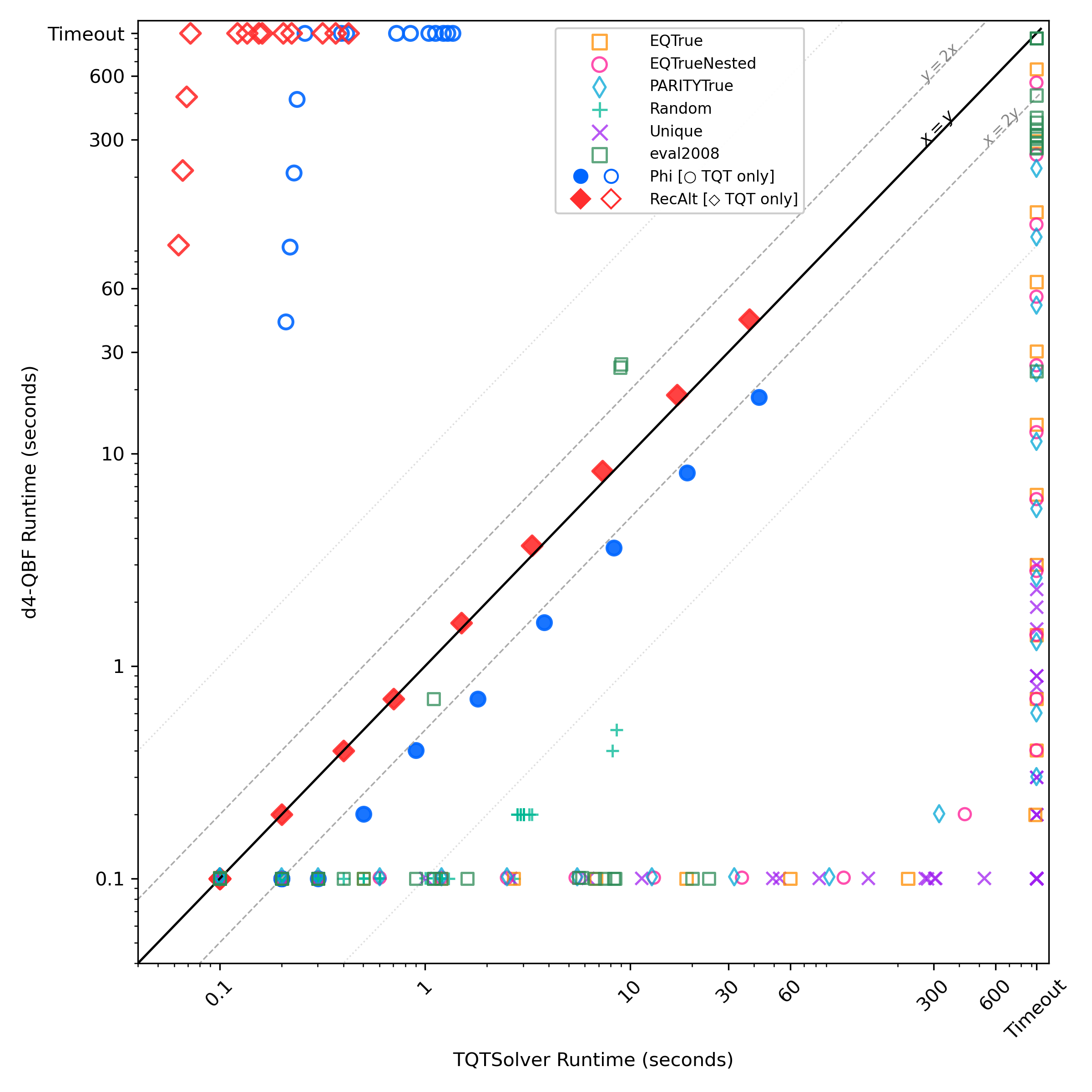}
    \caption{d4-QBF vs \qbfsolver on all benchmarks}
\label{all_benchmark_times}
\end{figure}
%\vspace{1cm}

\section{Conclusion and Future Works}
This paper explores the knowledge compilation approach to solve intractable problems for QBFs, including the \#QBF problem. The major contributions are the introduction of new target languages for QBFs which supports important QBF queries efficiently. In addition to the theoretical analysis of the strength and limitations of the proposed languages, the paper also develops a compiler \qbfsolver to generate the simplest \xyz representation for a given QBF. We hope to improve our solver by incorporating more engineering and optimization techniques useful in existing SAT, \#SAT, and \#QBF solvers. 
An immediate direction is to update the \qbfsolver to also answer the MaxQBF problem.

Observe that the \cxyz (\cdxyz) proof systems are also sound and complete for \#SAT (propositional model counting). It is open to relate the proposed proof systems with the existing \#SAT proof systems \cpog \cite{bryant2023certified} and \kcps~\cite{Capelli19}. In fact, a lower-bound for \cpog is open in the literature. Moreover, in \cxyz, existential (universal) nodes can be viewed as OR (AND), but these AND-gates are not decomposable as required in \cpog. For propositional unsatisfiable formulas, the size of \kcps proofs is equivalent to a restricted version of Resolution~\cite[Proposition 5.1]{BeyersdorffFHHK26} where lower bounds are known~\cite{Haken85,BeameP96}. Such formulas have easy \cxyz proofs when a strong FQBF proof system is used.

Mengel and Slivovsky~\cite{DBLP:conf/sat/MengelS21} formalize the QBDD system~\cite{DBLP:conf/cp/PanV04} as a QBF proof system. They extended propositional OBDD (ordered binary decision diagrams) proof systems~\cite{DBLP:conf/coco/BussIKS18} to QBFs by adding the $\forall$-reduction rule~\cite{DBLP:journals/jacm/BeyersdorffBCP20}. Note that the QBDD system is a dynamic QBF-proof-system whereas cert-TQD is a static \#QBF-proof-system. The proof-theoretic comparison of these remains open.

While \dxyz structure is built starting from the outermost quantifier, Capelli and Mengel~\cite{DBLP:conf/stacs/CapelliM19} use another approach which starts from the propositional formula and gradually builds sd-DNNFs (structured deterministic DNNF) recursively from inner to outermost quantifier. It would be interesting to compare \dxyz structure to this system.

\bibliography{myref}

\newpage
\appendix
\section*{Appendix}

\section{Tractable Queries on \fxyz}\label{sec:query_support_tqt}

The list of tractable queries in \xyz are described below.
\begin{itemize}
    \item \textbf{\#\xyz:} Given a \xyz structure $G=(Q,T)$, \#\xyz$(G)$ is the sum of weights of distinct model trees in $T$.

This has a pseudo-polynomial algorithm in \xyz as follows: maintain a variable $val$ at every node of $T$. In the bottom-up fashion, compute the value of $val$ at every node. At the leaves whose parent node is labelled by $x_n$, $val=$ node label itself. At the leaves whose label is `$0$F', the $val=0$ and at the leaves whose parent is not labelled by $x_n$, $val$ is equal to $2^{\text{node-label}}$.
At intermediate nodes, if the label is an $\exists$-variable, $val$ is the sum of $val$s of children nodes and if the label is an $\forall$-variable, $val$ is the product of $val$s of children nodes. Note that in case of multi-edges, we consider the single child node twice in this computation. Finally at the root we have the \#\xyz$(G)$ answer.

\item \textbf{Conjunction of two \xyzs with the same quantifier sequence:} Given two \xyz structures $G_1=(Q,T_1)$ and $G_2=(Q,T_2)$ corresponding to some quantified statements $Q.\phi_1$ and $Q.\phi_2$, $G_1\land G_2$ is the equivalent \xyz structure $G_3=(Q,T_3)$ such that $G_3$ corresponds to the quantified statement $Q.\phi_1\land \phi_2$.

This is tractable in \xyz as follows: We recursively generate the \xyz $T_3$ in the top-down fashion (starting from the root) from $T_1,T_2$. If corresponding nodes in both have the same labels, create a similar node in $T_3$ as well. If corresponding nodes in both are leaves whose parents are labelled by the innermost variable of $Q$, create a leaf for $T_3$ with value after conjunction of both leaves. If one of the nodes is a FQBF, irrespective of the other tree, make a FQBF leaf node in $T_3$ as well. If one of the nodes is a TAUT, create the remaining nodes according to the other \xyz for the remaining of this path. If one of the nodes has a outgoing multi-edge but the other doesn't, then consider the single child twice when comparing corresponding nodes in the construction. The size of $T_3$ formed this way would be $O(|T_1|+|T_2|)$.

\item \textbf{Ordered Conjunction of two \xyzs with disjoint variables:} Given two \xyz structures (in this order) $G_1=(Q_1,T_1)$ and $G_2=(Q_2,T_2)$, corresponding to some quantified statements $Q_1.\phi_1$ and $Q_2.\phi_2$ where $\vars(Q_1)\cap\vars(Q_2)=\emptyset$. 
$G_1\land G_2$ is the equivalent \xyz structure $G_3=(Q_1Q_2,T_3)$ such that $G_3$ corresponds to the quantified statement $Q_1Q_2.\phi_1\land \phi_2$.

This is tractable in \xyz as follows: We recursively generate the \xyz $T_3$ in the top-down fashion (starting from the root) from $T_1,T_2$. Make a copy of $T_1$ first to start constructing $T_3$. Now at the $0$-leaves of $T_1$, no need to add any nodes from $T_2$. At the $1$-leaves of $T_1$, add the entire $T_2$ tree to replace the leaf node. At the TAUT-nodes of $T_1$ at level $i$, create nodes $x_i,...,x_n\in Q_1$ and add multi edges for consecutive nodes in this sequence. Finally after $x_n$, add a multi edge to the start of the entire $T_2$ tree. The size of $T_3$ formed this way would be $O(|\vars(Q_1)|\cdot|T_1|\cdot|T_2|)$.

\item \textbf{Conditioning of $\exists$-variables in \xyz:} 
Given a \xyz structure $G=(Q,T)$ corresponding to some quantified statement $Q.\phi$ and a (partial) assignment $\alpha$ to $\vars_\exists(Q)$, $G|_\alpha$ is the \xyz structure $G':=(Q,T')$ such that $G'$ corresponds to the quantified structure $(Q.F)|_\alpha$.

This is tractable in \xyz as follows: We create $T'$ from $T$ in a top-down fashion in the quantifier sequence of variables of $\alpha$. Say $x_i$ is the first variable in $\alpha$, in the tree $T$, we cut the outgoing child labelled with $\lnot\alpha(x_i)$ from all the $x_i$ nodes and insert a FQBF node in its place. Note that the subtree from the outgoing child labelled with $\alpha(x_i)$ is still in the tree as is. Recursively we do this for all variables of $\alpha$.
\end{itemize}

\section{Missing \qbfsolver subroutines and evaluation graphs}\label{solver_appx}

\begin{algorithm}
\footnotesize
\begin{algorithmic}
\Require A QBF \(Q. \phi\) with a partial assignment \(\alpha\).
\Ensure A reduced QBF \(Q'. \phi'\), and an integer \(e\).
\State \(Q. \phi' \leftarrow \mathop{ReduceClauses}(Q. \phi)\)
\State \(Q'. \phi', e \leftarrow \mathop{RemoveUnusedExistentials}(Q. \phi')\)
\State Return \((Q'. \phi', e)\)
\end{algorithmic}
\caption{$\textit{Simplify}$}
\label{subr:simplify}
\end{algorithm}

\noindent
\begin{minipage}{0.475\textwidth}    
\begin{algorithm}[H]
\footnotesize
\begin{algorithmic}[1]
\Require A QBF \(Q. \phi\), a partial assignment \(\alpha\).
\Ensure A reduced QBF \(Q. \phi'\).
\State \(\phi' \leftarrow []\)
\For{Each clause \(C\) in \(\phi\)}
\State \(C' \leftarrow [], sat \leftarrow False\)
\For{Each literal \(l\) in \(C\)}
\If{\(var(l)\) is assigned in \(\alpha\)}
\If{\(l\) evaluates to true in \(\alpha\)}
\State
\(sat \leftarrow true\)
\State
\textbf{Break}
\EndIf
\Else
\State Append \(l\) to \(C'\)
\EndIf
\EndFor
\If{Not \(sat\)}
\State Append \(C'\) to \(\phi'\)
\EndIf
\EndFor
\State Return \(Q. \phi'\)
\end{algorithmic}
    \caption{$\mathop{ReduceClauses}$}
    \label{subr:reduce_clauses}
\end{algorithm}
\end{minipage}
\hfill
\begin{minipage}{0.475\textwidth}
\begin{algorithm}[H]
\footnotesize
\begin{algorithmic}[1]
\Require A QBF \(Q. \phi\).
\Ensure A reduced QBF \(Q'. \phi\), and an integer \(e\).
\State Initialize \(Q' \leftarrow [], e \leftarrow 0, p \leftarrow 0\)
\For{Each quantifier block \(B\) in \(Q\)}
\If{\(B\) is a universal block}
\State \(Q' \leftarrow Q' \cup B\)
\State \(p \leftarrow p + \lvert B \rvert\)
\State \textbf{Continue}
\EndIf
\State Initialize \(B' \leftarrow []\)
\For{Each variable \(v\) in \(B\)}
\If{\(v\) appears in a clause in \(\phi\)}
\State
\(B' \leftarrow B' \cup \{v\}\)
\Else
\State \(e \leftarrow e + 2^p\)
\EndIf
\EndFor
\If{\(B' \neq \emptyset\)}
\State \(Q' \leftarrow Q' \cup B'\)
\EndIf
\EndFor
\State Return \((Q'. \phi, e)\)
\end{algorithmic}
\caption{$\mathop{RemoveUnusedExistentials}$}
    \label{subr:unused_vars}
\end{algorithm}
\end{minipage}

\end{document}